\newcommand\soutpars[1]{\let\helpcmd\sout\parhelp#1\par\relax\relax}
\long\def\parhelp#1\par#2\relax{%
  \helpcmd{#1}\ifx\relax#2\else\par\parhelp#2\relax\fi%
}
\newcommand{\PreserveBackslash}[1]{\let\temp=\\#1\let\\=\temp}
\newcolumntype{C}[1]{>{\PreserveBackslash\centering}m{#1}}
\newcolumntype{R}[1]{>{\PreserveBackslash\raggedleft}m{#1}}
\newcolumntype{L}[1]{>{\PreserveBackslash\raggedright}m{#1}}
\newcommand{\blue}[1]{\color{black}{#1}\color{black}}
\newtheorem{defn}{Definition}
\newtheorem{assump}{Assumption}
\newtheorem{prop}{Proposition}
\newtheorem{coro}{Corollary}
\newtheorem{remark}{Remark}
\begin{document}
\bstctlcite{IEEEexample:BSTcontrol}
\title{Intelligent Reflecting Surface Aided MIMO with Cascaded LoS Links: Channel Modelling and Full Multiplexing Region}
%
%
%

\author{Mingchen~Zhang,~\IEEEmembership{Student~Member,~IEEE,}
        Xiaojun~Yuan,~\IEEEmembership{Senior~Member,~IEEE}
\thanks{M. Zhang and X. Yuan are with the National Key Laboratory of Science and Technology on Communications, University of Electronic Science and Technology of China, Chengdu 611731, China (e-mail: zhangmingchen@std.uestc.edu.cn; xjyuan@uestc.edu.cn). This paper was presented in part at 2022 IEEE International Conference on Communications, Seoul, South Korea, May 2022.}}


\maketitle

\begin{abstract}
    This work studies the modelling and the optimization of intelligent reflecting surface (IRS) assisted multiple-input multiple-output (MIMO) systems through cascaded line-of-sight (LoS) links.
    In Part I of this work, we build up a new IRS-aided MIMO channel model, named the cascaded LoS MIMO channel. The proposed channel model consists of a transmitter (Tx) and a receiver (Rx) both equipped with uniform linear arrays, and an IRS is used to enable communications between the transmitter and the receiver through the LoS links seen by the IRS. When modeling the reflection of electromagnetic waves at the IRS, we take into account the curvature of the wavefront on different reflecting elements. Based on the established model, we study the spatial multiplexing capability of the cascaded LoS MIMO system. We introduce the notion of full multiplexing region (FMR) for the cascaded LoS MIMO channel, where the FMR is the union of Tx-IRS and IRS-Rx distance pairs that enable full multiplexing communication. Under a special passive beamforming strategy named reflective focusing, we derive an inner bound of the FMR, and provide the corresponding orientation settings of the antenna arrays that enable full multiplexing. Based on the proposed channel model and reflective focusing, the mutual information maximization problem is discussed in Part II.
\end{abstract}

\begin{IEEEkeywords}
Intelligent reflecting surface, channel modeling, reflective focusing, full multiplexing region
\end{IEEEkeywords}

\IEEEpeerreviewmaketitle

\section{Introduction}

Recently, intelligent reflecting surface (IRS), also known as (a.k.a.) reconfigurable intelligent surface (RIS), has been extensively studied in a variety of new technical challenges and new IRS-aided communication scenarios \cite{he_cascaded_2020,liu2020matrix,wuIntelligentReflectingSurface2019a,diHybridBeamformingReconfigurable2020a,yan_passive_2020,pan_uav-assisted_2021,wan_terahertz_2021}.
An IRS is made of a large number of low-cost reconfigurable elements, a.k.a. meta atoms or unit cells, that are able to control how incident electromagnetic (EM) waves are reflected. 
Unlike relays that are equipped with radio frequency (RF) chains and active amplifiers, IRS works passively after being fabricated. Due to the appealing properties of low cost and high energy efficiency, 
IRS is widely considered as a promising next-generation technology to revolutionize the classical paradigm of wireless communications.

The deployment of IRS fundamentally changes the electromagnetic propagation environment of wireless communications, and accurate channel models are of crucial importance to fully exploit the potential of an IRS-assisted wireless networks. Much research effort has been made towards the understanding of the cascaded channel created by IRS reflection.
Based on the scalar diffraction theory and the Huygens-Fresnel principle, the authors of \cite{di_renzo_analytical_2020} proposed a practical path loss model in both the \textit{near-field} and far-field of a special one-dimensional IRS.
In \cite{garcia_reconfigurable_2020}, the authors decomposed the \textit{Fresnel} zone of the IRS, and discussed the power scaling of the IRS reflected wave with the distance to the IRS numerically.
In \cite{tang2020wireless}, the authors developed a path loss model for IRS based on an empirical response function of unit cells.
The accuracy of the proposed model were practically validated by an IRS fabricated in an anechoic chamber. 
Ref. \cite{ellingson2019path} verified the findings in \cite{tang2020wireless} by unifying the path loss models of discrete unit-cell array and continuous scattering plate. 
In \cite{ozdogan_intelligent_2020}, the authors computed the scattered EM field of a planar IRS for an incident wave with specific polarization. 
Ref. \cite{bjornson_power_2020} studied the power scaling law of an IRS with a large number of unit cells.
In \cite{najafi_physics-based_2020}, the authors developed a physics-based path loss model for multi-tile IRSs, and accordingly built up an IRS-aided MIMO model and an IRS optimization framework, where general polarizations and directions of the incident/reflected waves are considered under the far-field assumption. 
In \cite{danufanePathLossReconfigurableIntelligent2021a}, the authors generalized the analyses in \cite{di_renzo_analytical_2020}, and developed a path-loss model for two-dimensional homogenizable IRS based on the vector generalization of the Green's theorem. 
\blue{In \cite{basarIndoorOutdoorPhysical2021}, the authors considered a 5G mmWave communication scenario with a random number of clusters/scatters, and provided a narrowband channel model for IRS-assisted systems for indoor and outdoor environments.}


Most existing works adopted the \textit{far-field} assumption on modelling the IRS-aided wireless channels, i.e., the size of an IRS is sufficiently small as compared to the distances from the IRS to the transmit/receive antenna arrays. In other words, the Tx and the Rx operate in the \textit{Fraunhofer} region of the IRS, where the wavefront originating from a Tx/Rx antenna can be approximated as a plane at the IRS. Such assumption may result the channel models oversimplified in many practical IRS-aided communication scenarios. 
Particularly, when the size of an IRS becomes comparable to the link distances, the Tx and the Rx operate in the Fresnel region of the IRS, where the curvature of the wavefront impinging upon the IRS cannot be ignored. For example, it was shown in \cite{dardari_holographic_2021} that when an IRS with the size between 10 cm and 1 m operates in the millimeter-wave band, the link distances between 1 m and 100 m are included almost entirely in the Fresnel region, where the plane wave approximation of the wavefront does not hold anymore. \blue{In \cite{dovelosIntelligentReflectingSurfaces2021a}, the authors proposed a near-field channel model for IRS-aided Terahertz (THz) single-input-single-output (SISO) systems. 
It shows that due to the large number of reflecting elements with respect to the wavelength, the Fresnel region of a THz IRS can be as short as several meters. } As such, it is of pressing importance to understand and characterize spherical wavefront propagation in an IRS-aided wireless communication environment.


To tackle this issue, in this paper, we build up a novel IRS-assisted channel model, named \textit{cascaded LoS MIMO channel}. In this model, both the Tx and the Rx are equipped with uniform linear arrays (ULAs), and the direct link between the Tx and the Rx is blocked by obstacles. 
The communication is enabled by an IRS that connects the LoS links between the IRS and the Tx/Rx. When modeling the reflection of electromagnetic waves at the IRS, the curvature of the wavefront on different REs are taken into account, which is distinct from many existing works that take the plane-wave assumption. Based on the proposed channel model, we investigate the spatial multiplexing capability of the cascaded LoS MIMO channel, where spatial multiplexing is a critical performance metric to evaluate the quality of a MIMO channel. We show that for a cascaded LoS MIMO channel, the maximum spatial multiplexing capability can only cover a certain area (from the IRS), named the full multiplexing region (FMR). The main contribution of this paper is summarized as follows:
\begin{itemize}
    \item We establish the cascaded LoS MIMO channel model. In this model, when modelling the reflection of EM waves at the IRS, we assume that the size of each RE is small enough relative to the link distances so that the wavefront impinging upon a single RE can still be viewed as a plane, but for the IRS as a whole, the curvature of the wavefront are taken into account. The path loss and the phase shift of each Tx-IRS-Rx link are derived based on physics and geometry. Based on that, we represent the cascaded LoS MIMO channel by an explicit expression.
    \item We define the notion of FMR for characterizing the spatial multiplexing capability of a cascaded LoS MIMO channel. A MIMO channel is referred to as able to support full multiplexing communication between the Tx and the Rx when all the eigenmodes of the channel share an equal channel gain. The FMR of a cascaded LoS MIMO channel is defined by the union of all Tx-IRS and IRS-Rx distance pairs that enable full multiplexing, with adjustable Tx/Rx orientations and the PB of the IRS. We show that even if both the Tx-IRS channel and the IRS-Rx channel can support full multiplexing, the full multiplexing capability of the overall cascaded LoS MIMO channel cannot be necessarily guaranteed.
    \item We introduce a special PB strategy called \textit{reflective focusing (RF)} which, as a concept borrowed from optics, aims to coherently superimpose the IRS-reflected waves (originating from a single Tx antenna) at a single Rx antenna.
    With RF, we derive a closed-form inner bound of the FMR, and provide the Tx/Rx orientations to achieve full multiplexing.
\end{itemize}

The remainder of this paper is organized as follows. In Section \ref{SecModel}, we present the proposed cascaded LoS MIMO channel model and the reflective focusing PB strategy. In Section \ref{SecFMR}, we introduce the full multiplexing region of the cascaded LoS MIMO channel, and derive an inner bound of the FMR based on reflective focusing. The conclusions are drawn in Section \ref{SecConclusion}. 

\textit{Notation:} We use a bold symbol lowercase letter and bold symbol capital letter to denote a vector and a matrix, respectively. The trace, conjugate, transpose, conjugate transpose, and inverse of a matrix are denoted by $\text{Tr}[\cdot]$, $(\cdot)^*$, $(\cdot)^{\textrm{T}}$, $(\cdot)^{\textrm{H}}$, and $(\cdot)^{-1}$, respectively; $|\cdot|$ denotes the modulus of a complex number or the determinant of a square matrix; $||\cdot||$ denotes the $\ell^2$ norm; ${\textrm{diag}}\left\{\boldsymbol{a}\right\}$ represents the diagonal matrices with the diagonal specified by $\boldsymbol{a}$; $\mathcal{I}_N$ denotes the index set $\{-\frac{N-1}{2},\ldots,\frac{N-1}{2}\}$, where $N$ is a positive odd number.

\section{Cascaded LoS Channel Modelling}\label{SecModel}

In this section, we first introduce the geometric model of the considered cascaded LoS MIMO system, and the response function of each reflecting element of the IRS. Based on that, we derive the attenuation and phase shift of the EM wave experienced in each Tx-IRS-Rx link, and establish an explicit channel representation of the considered system.

\subsection{IRS-Aided MIMO System}\label{SecSystem}


The considered cascaded LoS MIMO system is illustrated in Fig. \ref{SysFiga}. The direct path between the Tx and the Rx is blocked by obstacles, and an IRS is deployed to assist the communication between the Tx and the Rx. The propagation paths from the Tx to the IRS and from the IRS to the Rx are in LoS, and they are only affected by free-space attenuation. The Tx and the Rx are equipped with ULAs with $N_{\mathsf{t}}$ and $N_{\mathsf{r}}$ antennas, respectively. The inter-antenna spacings of the Tx and the Rx are $d_{\mathsf{t}}$ and $d_{\mathsf{r}}$, respectively. The origin $O$ is taken to be at the center of the IRS. The $x$- and the $y$- axes are taken to be parallel to the sides of the IRS, and the $z$-axis is taken to be perpendicular to the IRS. The lengths of Tx and Rx are given by $L_{\mathsf{t}}$ and $L_{\mathsf{r}}$, respectively.
For notational brevity, we assume that $N_{\mathsf{t}}$ and $N_{\mathsf{r}}$ are both odd numbers, and denote by $\mathcal{I}_N$ the index set $\{-\frac{N-1}{2},\ldots,\frac{N-1}{2}\}$.
Denote by $t_p$ the $p$-th antenna of the Tx, and by $r_q$ the $q$-th antenna of the Rx, where $p \in \mathcal{I}_{N_{\mathsf{t}}}$ and $q\in \mathcal{I}_{N_{\mathsf{r}}}$. $t_0$ and $r_0$ are at the centers of the Tx and the Rx, respectively.
To describe the locations of the Tx and the Rx, denote by $D_{\mathsf{t}}$ and $D_{\mathsf{r}}$ the distances from the origin to $t_0$ and $r_0$, respectively, and denote by $\omega_{\mathsf{t}}$/$\varphi_{\mathsf{t}}$ and $\omega_{\mathsf{r}}$/$\varphi_{\mathsf{r}}$ the azimuth/elevation angles of $t_0$ and $r_0$, respectively, with $\varphi_{\mathsf{t}},\varphi_{\mathsf{r}}\in [0,\frac{\pi}{2}]$, $\omega_{\mathsf{t}},\omega_{\mathsf{r}} \in [0,2\pi)$. $\omega_{\mathsf{r}}$ and $\varphi_{\mathsf{r}}$ are illustrated in Fig. \ref{SysFigb} for an example.
To describe the orientations of the Tx and the Rx, we introduce auxiliary coordinate axes and angles 
as shown in Fig. \ref{SysFigc}. Specifically, take the $z_{\mathsf{r}}$-axis in the direction from the origin to $r_0$, and the $y_{\mathsf{r}}$-axis on the $z$-$z_{\mathsf{r}}$ plane and perpendicular to the $z_{\mathsf{r}}$-axis, with the angle between the positive directions of the $z$- and $y_{\mathsf{r}}$- axes equal to $\varphi_{\mathsf{r}}+\frac{\pi}{2}$. The $x_{\mathsf{r}}$-axis is defined by its unit vector $\mathbf{n}_{x_{\mathsf{r}}} = \mathbf{n}_{y_{\mathsf{r}}}\times \mathbf{n}_{z_{\mathsf{r}}}$, where $\mathbf{n}_{y_{\mathsf{r}}}$ and $\mathbf{n}_{z_{\mathsf{r}}}$ are the unit vectors of the $y_{\mathsf{r}}$- and $z_{\mathsf{r}}$- axes, respectively, and $\times$ is the cross product operator. Define the direction from $r_{-\frac{N_{\mathsf{r}}-1}{2}}$ to $r_{\frac{N_{\mathsf{r}}-1}{2}}$ as the principal direction of the Rx. $\psi_{\mathsf{r}}\in [0,\pi]$ and $\gamma_{\mathsf{r}} \in [0,2\pi)$ are defined as the elevation and the azimuth angles of the principal direction of the Rx in the $x_{\mathsf{r}}$-$y_{\mathsf{r}}$-$z_{\mathsf{r}}$ coordinate system, respectively. Such a geometrical model with parameters $\varphi_{\mathsf{r}}$, $\omega_{\mathsf{r}}$, $\psi_{\mathsf{r}}$, $\gamma_{\mathsf{r}}$, and $D_{\mathsf{r}}$ are competent to describe any location and orientation of the Rx relative to the IRS. For the Tx-IRS side, axes $x_{\mathsf{t}}$-, $y_{\mathsf{t}}$-, $z_{\mathsf{t}}$-, and angles $\psi_{\mathsf{t}}$ and $\gamma_{\mathsf{t}}$ are defined in a similar manner as the counterparts in the IRS-Rx side, with $\psi_{\mathsf{t}}\in [0,\pi]$ and $\gamma_{\mathsf{t}} \in [0,2\pi)$.

\begin{figure}[t]
    \centering
    \begin{subfigure}[b]{0.35\textwidth}
        \centering
        \includegraphics[width=\textwidth]{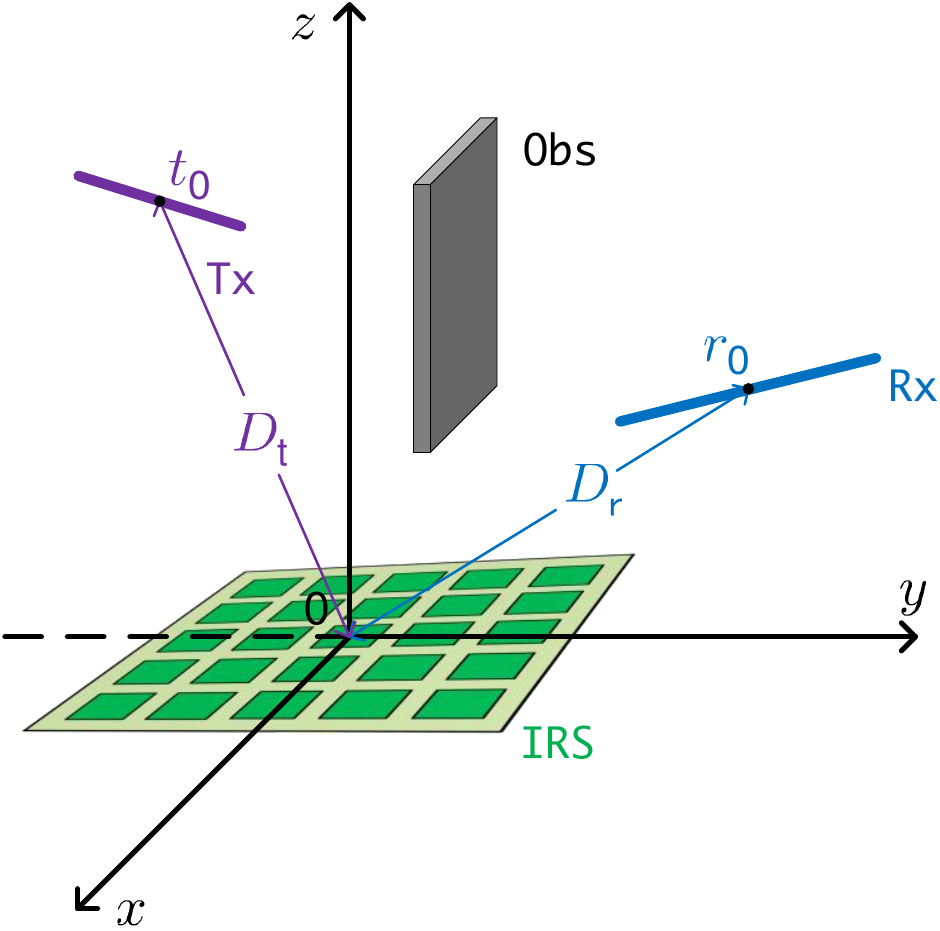}
        \caption{}
        \label{SysFiga}
    \end{subfigure}
    \hfill
    \begin{subfigure}[b]{0.35\textwidth}
        \centering
        \includegraphics[width=\textwidth]{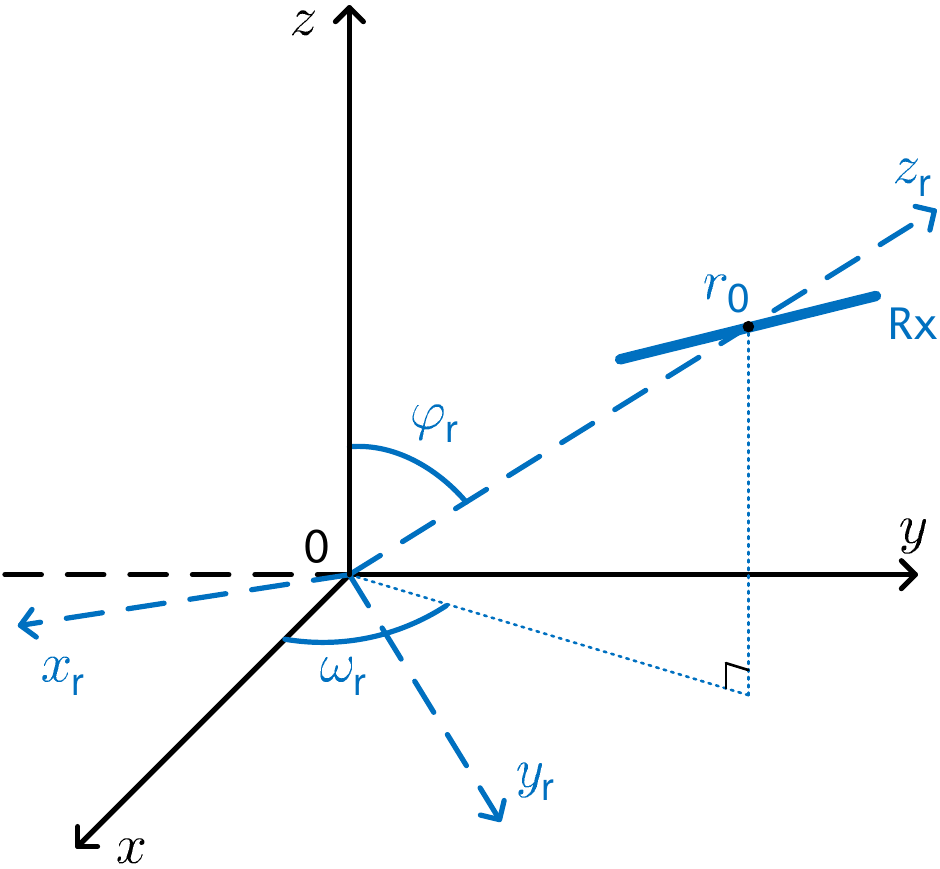}
        \caption{}
        \label{SysFigb}
    \end{subfigure}
    \hfill
    \begin{subfigure}[b]{0.27\textwidth}
        \centering
        \includegraphics[width=\textwidth]{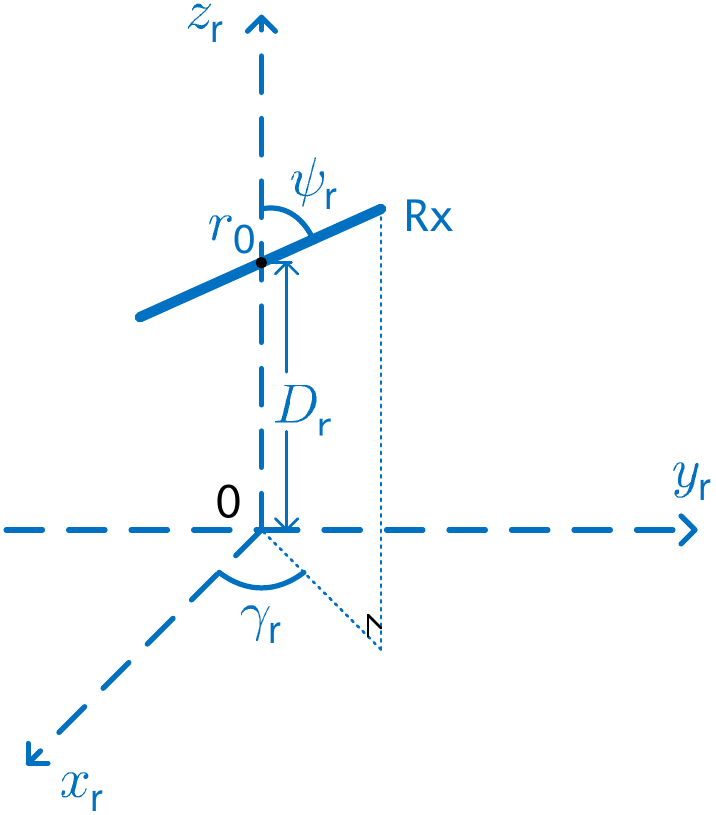}
        \caption{}
        \label{SysFigc}
    \end{subfigure}
    \caption{Geometrical model for a cascaded LoS MIMO system. (a) The LoS channel between the Tx and the Rx is blocked by obstacles. A cascaded LoS link aided by an IRS is established between the two end. (b) Parameters to describe the centers of the Tx and the Rx. (c) Parameters to describe the orientations of the Tx and the Rx.}
    \label{SysFig}
\end{figure}

The IRS consists of multiple reflecting elements (REs) evenly spaced on a grid, as illustrated in Fig. \ref{PlaneelementFig}. Each RE is composed of numerous sub-wavelength unit cells (also known as meta atoms). Each unit cell contains programmable components (such as tunable varactor diodes or switchable positive-intrinsic-negative diodes) and is able to change the properties of the reflected EM waves. In this paper, 
we consider that the unit cells on one RE are arranged densely enough so that the collection of them acts as a continuous programmable surface. The surface impedance is suitably designed to realize reflection coefficient $\Gamma=\tau e^{j\beta(x,y)}$, where ${\beta(x,y)}$ is the phase shift applied at point $(x,y)$ on the RE, and $\tau$ is the amplitude of the reflection coefficient which is assumed to be constant across the RE. This definition of RE is similar to that of \textit{continuous tile} in \cite{najafi_physics-based_2020}. The REs are spaced by $S_{\mathsf{x}}$ and $S_{\mathsf{y}}$ along the $x$- and $y$- axes, respectively, with each RE of size $L_{\mathsf{x}} \times L_{\mathsf{y}}$. Each column ($x$ direction) and each row ($y$ direction) of the IRS contain $Q_{\mathsf{x}}$ and $Q_{\mathsf{y}}$ REs, respectively.
The RE in the $k$-th row and the $l$-th column of the IRS is denoted by $m_{k,l},~ k \in \mathcal{I}_{Q_{\mathsf{x}}},~ l \in \mathcal{I}_{Q_{\mathsf{y}}}$, and $m_{0,0}$ is centered at the origin $O$\footnotemark.\footnotetext{In this paper, $Q_{\mathsf{x}}$, $Q_{\mathsf{y}}$, $N_{\mathsf{t}}$, and $N_{\mathsf{r}}$ are assumed to be odd numbers for simplifying the coordinate expressions of antennas and REs. The results of this paper can be easily extended to the cases where $Q_{\mathsf{x}}$, $Q_{\mathsf{y}}$, $N_{\mathsf{t}}$, and/or $N_{\mathsf{r}}$ are even numbers.}
The size of the entire IRS is given by $L_{\mathsf{x}}^{\mathsf{tot}} \times L_{\mathsf{x}}^{\mathsf{tot}}$.
We next characterize the cascaded LoS MIMO channel based on the geometric system model described above.



\begin{figure}[t]
    \centering
    \begin{subfigure}[b]{0.35\linewidth}
        \centering
        \includegraphics[width=\linewidth]{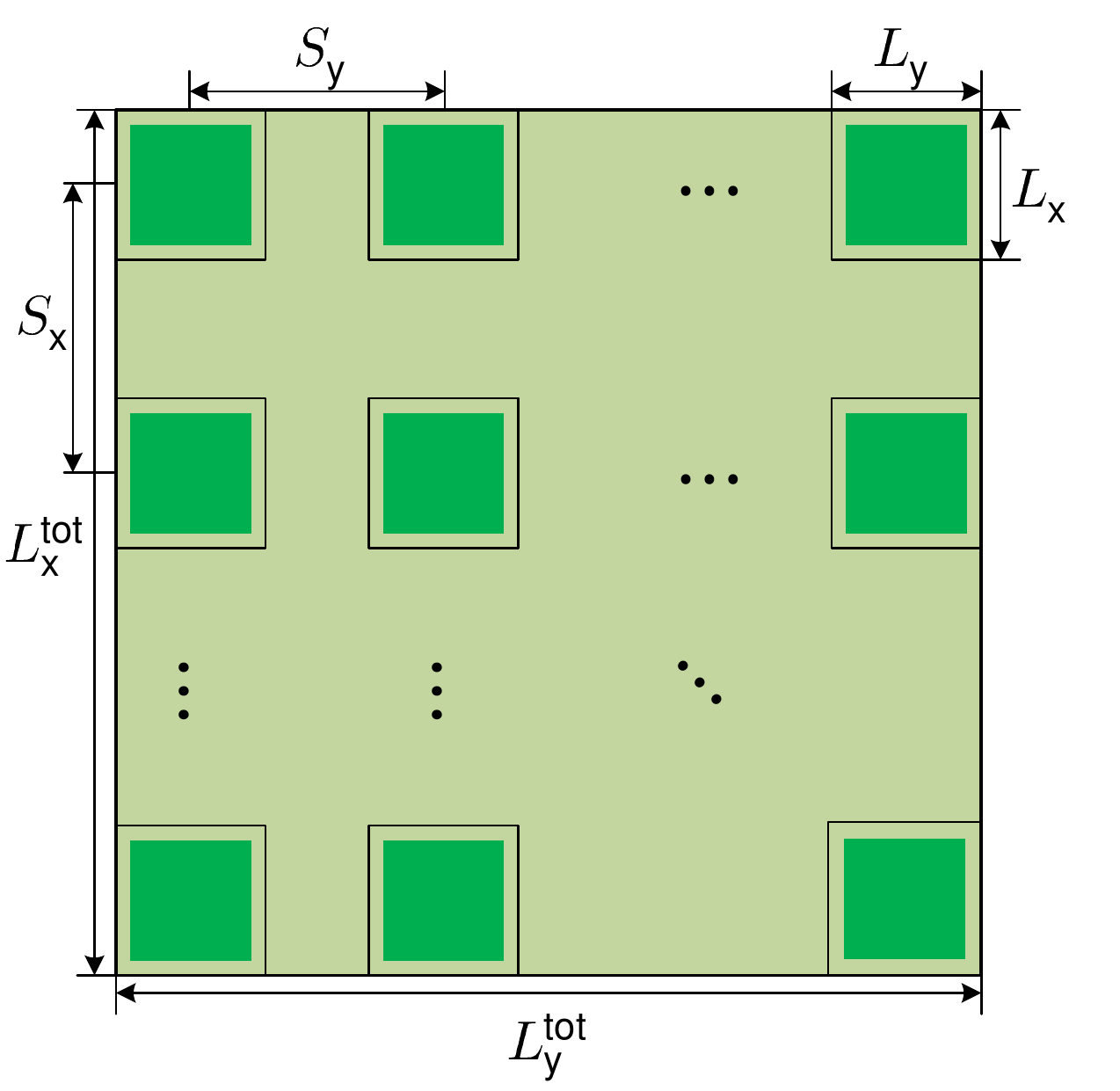}
        \caption{}
        \label{PlaneelementFig}
    \end{subfigure}
    \begin{subfigure}[b]{.35\linewidth}
        \centering
        \includegraphics[width=\linewidth]{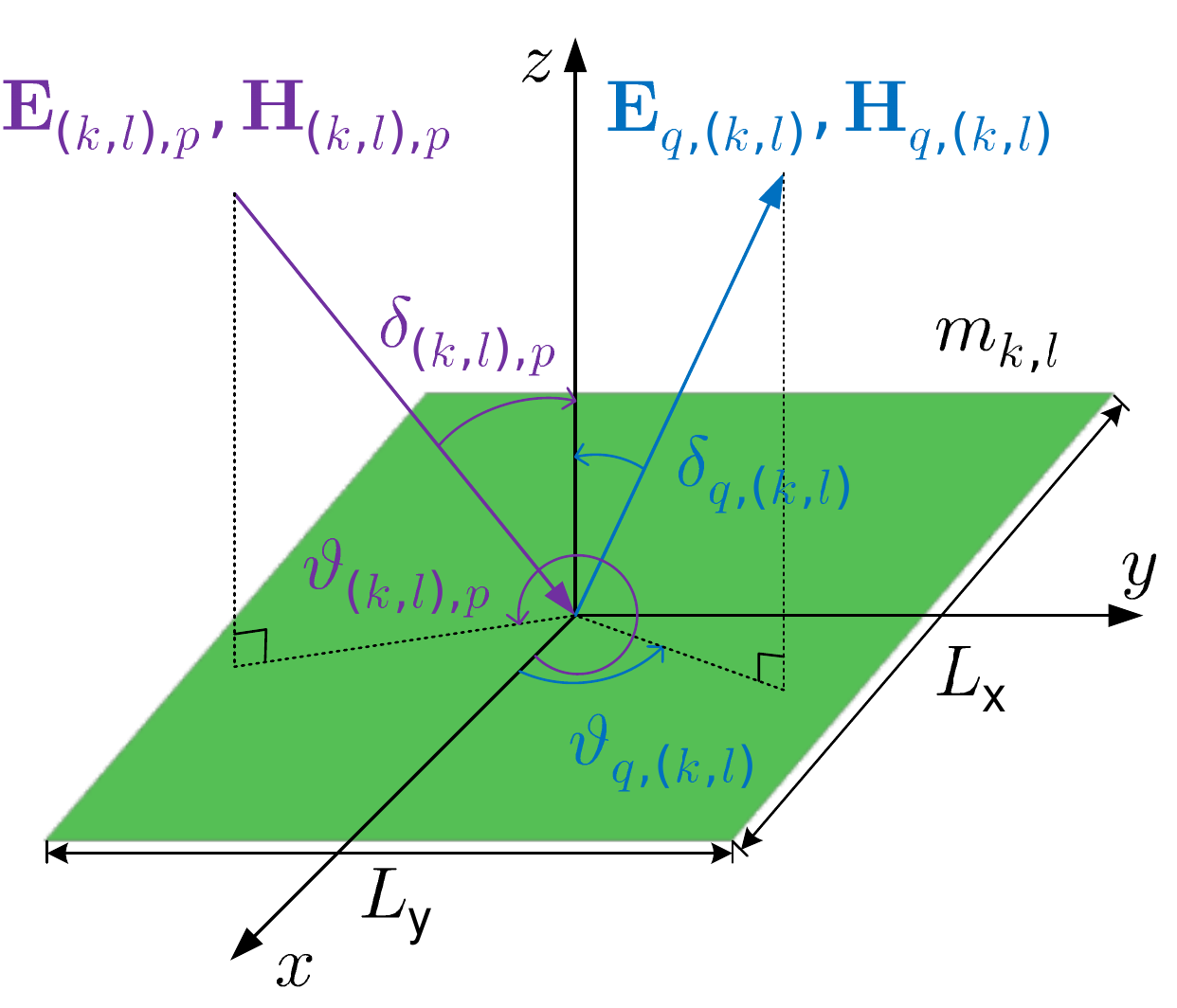}
        \caption{}
        \label{PhysicsFig}
    \end{subfigure}
    \caption{(a) Illustration of a planar IRS of size $L_{\mathsf{x}}^{\mathsf{tot}} \times L_{\mathsf{y}}^{\mathsf{tot}}$ comprised of REs of size $L_{\mathsf{x}} \times L_{\mathsf{y}}$ with spacing $S_{\mathsf{x}}$ and $S_{\mathsf{y}}$. Each RE works as a continuous surface. (b) Plane wave impinges on RE $m_{k,l}$ with incident angle $\boldsymbol{\Psi}_{(k,l),p} = (\delta_{(k,l),p},\vartheta_{(k,l),p},\varsigma_{(k,l),p})$ and is reflected with reflection angle $\boldsymbol{\Psi}_{q,(k,l)} = (\delta_{q,(k,l)},\vartheta_{q,(k,l)})$.}
\end{figure}


\subsection{RE Response Function}

In this subsection, we consider the impact of each RE imposed on its reflected impinging EM wave. To simplify the discussion, we make the following basic assumption:
\begin{assump}\label{REFarFieldAssump}
    Each RE of the IRS is in the far field of each single Tx and Rx antenna. 
    For an RE of size $L_{\mathsf{x}}\times L_{\mathsf{y}}$, the boundary of its far field and near field is defined as $B_{\mathsf{RE}}\triangleq \frac{2\left(L_{\mathsf{x}}^2+L_{\mathsf{y}}^2 \right)}{\lambda}$.\footnotemark Beyond this boundary, the maximum phase difference (MPD) of the received signal on an RE is less than $\frac{\pi}{8}$, and thus the curvature of the wavefront on the RE can be neglected. 
\end{assump}
\footnotetext{\blue{This far-field boundary is widely adopted in the literature; see, e.g., \cite[Page 111]{huang_antennas_nodate}, \cite[Page 34]{balanis_antenna_2016}, and \cite{wan_terahertz_2021}.}}
Assumption \ref{REFarFieldAssump} is made for a small RE size relative to the Tx-IRS and IRS-Rx distances. 
\blue{
As a justification of Assumption \ref{REFarFieldAssump}, in Fig. \ref{Range}, we show the MPDs of a single RE and of the entire IRS under different carrier frequencies. We consider each RE is of size $2\textrm{cm} \times 2\textrm{cm}$, and the IRS is of size $0.4\textrm{m} \times 0.4\textrm{m}$. The considered RE is at the center of the IRS, and the source equipped with a single antenna is in front of the IRS. The distance between the source and the center of the IRS is defined by $D$. The carrier frequency $f$ is chosen to $75$GHz, $140$GHz, and $338$GHz, which are typical values for E-band, D-band, and THz-band communications, respectively. We observe that the MPDs decrease with the distance $D$, and that the higher the carrier frequency, the larger the phase differences over a single RE and over the entire IRS. Denote by $B_{\mathsf{RE}}$ and $B_{\mathsf{IRS}}$ the distances that $\textrm{MPD}_{\mathsf{RE}}$ and $\textrm{MPD}_{\mathsf{IRS}}$ reduce to $\frac{\pi}{8}$, respectively.
From Fig. \ref{Range}, $(B_{\mathsf{RE}},B_{\mathsf{IRS}})$ for $f=75$GHz, 140GHz, and 338GHz, are respectively $(0.4, 160)$ (m), $(0.75,298.7)$ (m), and $(1.8,721)$ (m). 
The results show that Assumption \ref{REFarFieldAssump} is valid for the RE several meters away from the source, while for the entire IRS, it may still in the near field of the source even at hundreds of meters away. It is worth noting that although our proposed channel model is developed based on Assumption \ref{REFarFieldAssump}, the proposed model still holds for $D>B_{\mathsf{IRS}}$ and reduces to the far-field model that has been extensively studied. 
}



\begin{figure}[t]
    \centering
    \begin{subfigure}[b]{0.43\linewidth}
        \centering
        \includegraphics[width=\linewidth]{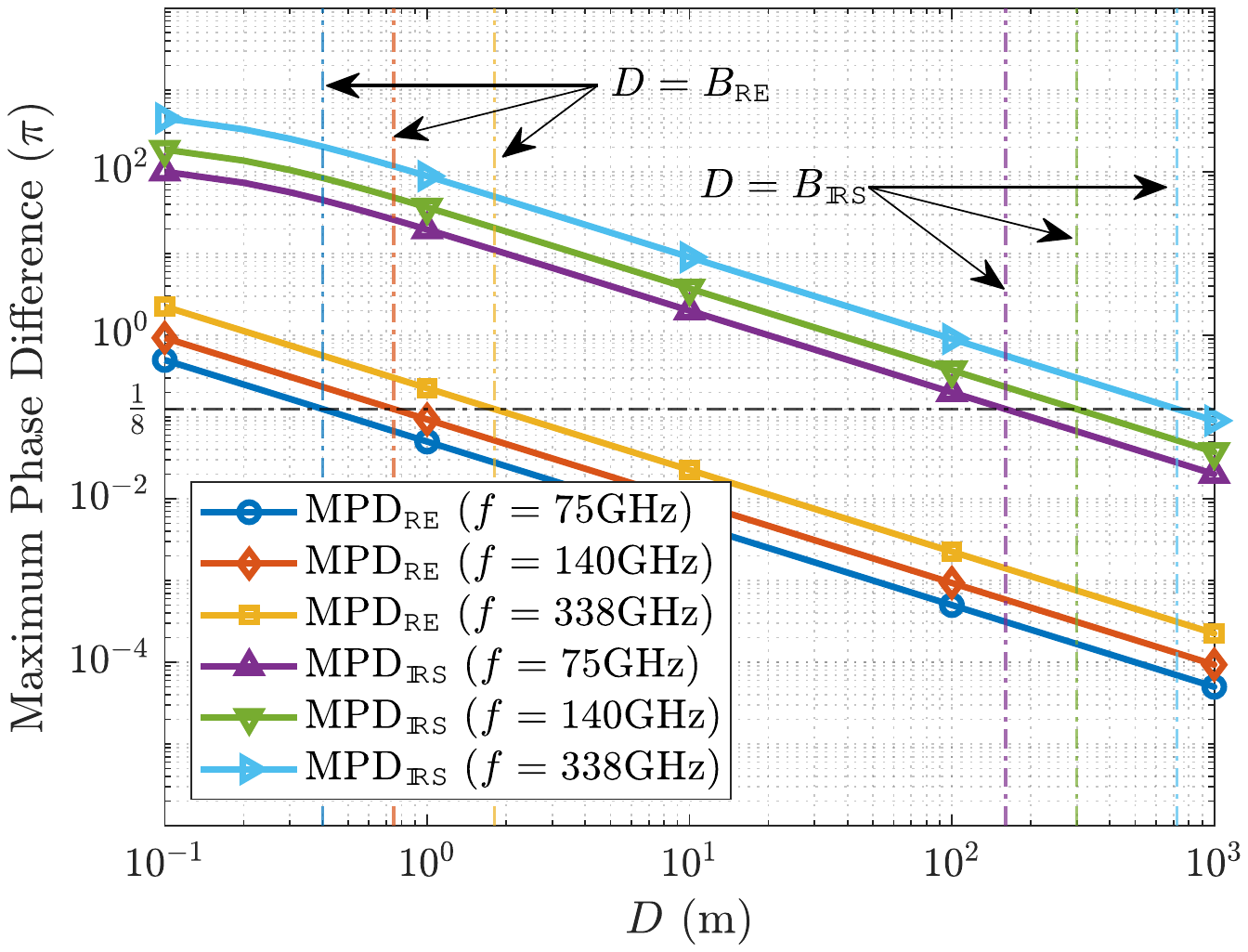}
        \caption{}
        \label{Range}
    \end{subfigure}
    \begin{subfigure}[b]{.43\linewidth}
        \centering
        \includegraphics[width=\linewidth]{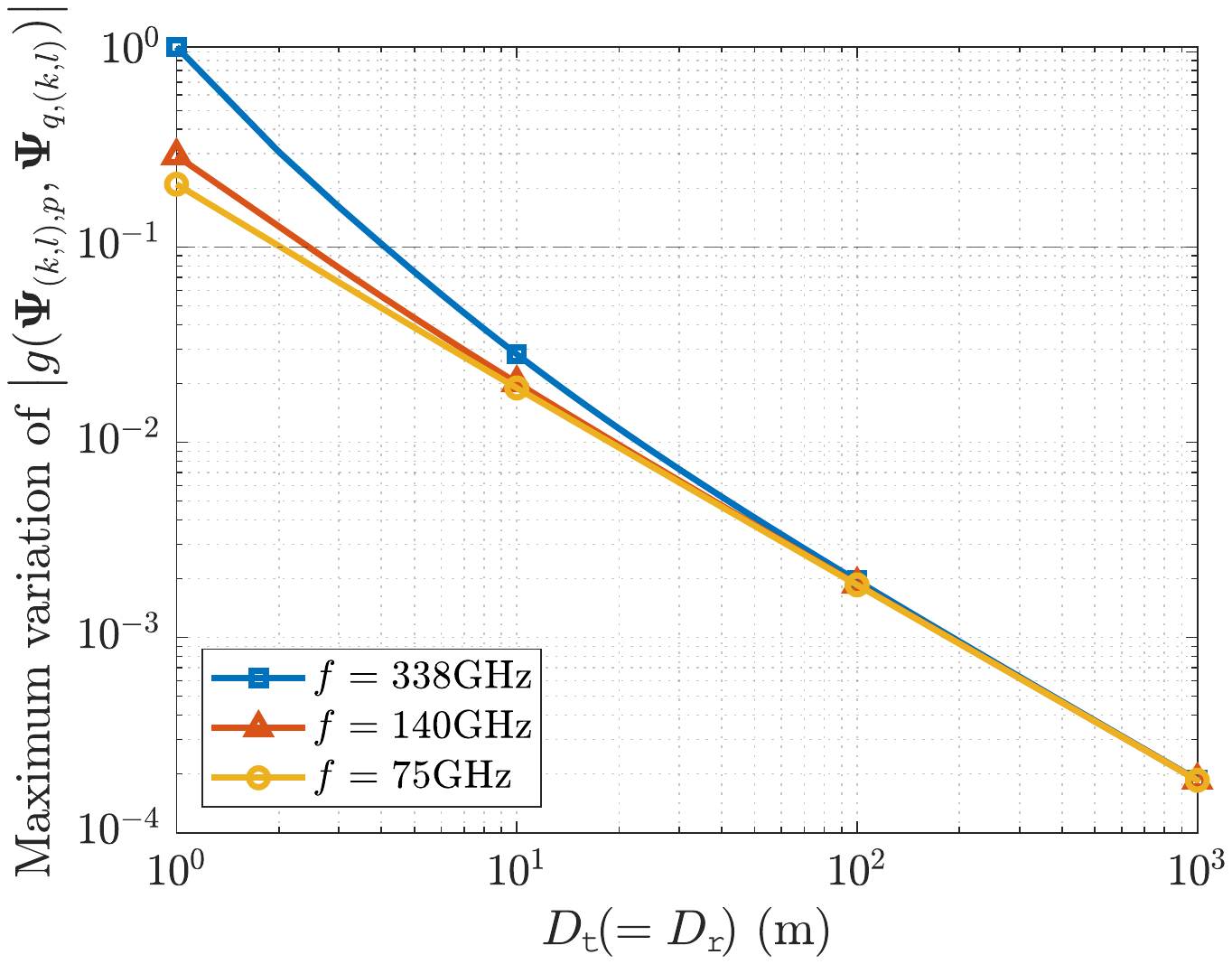}
        \caption{}
        \label{VARG}
    \end{subfigure}
    \caption{(a) The maximum phase differences (MPDs) along a single RE and the entire IRS against the source-IRS distance. (b) Maximum variation of $\left|g(\mathbf{\Psi}_{(k,l),p},\mathbf{\Psi}_{q,(k,l)})\right|$ over $|g_0|$ against the Tx-IRS (IRS-Rx) distance.}
\end{figure}


Under Assumption \ref{REFarFieldAssump}, the wave radiated by each Tx/Rx antenna can be approximately regarded as a plane wave at each single RE, and the REs work in a similar manner as continuous tiles in \cite{najafi_physics-based_2020}. Thus the continuous tile channel model developed in \cite{najafi_physics-based_2020} can be borrowed for our purpose, as detailed below. For a specific RE $m_{k,l}$, $k \in \mathcal{I}_{Q_{\mathsf{x}}},~ l \in \mathcal{I}_{Q_{\mathsf{y}}}$, when reflecting the impinging wave radiated from the Tx antenna $t_p$, to the Rx antenna $r_q$, the impact imposed on the reflected wave can be characterized by a complex response function $g(\boldsymbol{\Psi}_{(k,l),p},\boldsymbol{\Psi}_{q,(k,l)})$, $p \in \mathcal{I}_{N_{\mathsf{t}}},~ q \in \mathcal{I}_{N_{\mathsf{r}}}$, which is referred to the \textit{RE response function} of $m_{k,l}$. Specifically, $\boldsymbol{\Psi}_{(k,l),p} = (\delta_{(k,l),p},\vartheta_{(k,l),p},\varsigma_{(k,l),p})$ models the impinging wave, where $\delta_{(k,l),p}$ and $\vartheta_{(k,l),p}$ denote the elevation and azimuth angles of the direction from $t_p$ to $m_{k,l}$ in the $x$-$y$-$z$ coordinate system, respectively, and $\varsigma_{(k,l),p}$ represents the polarization of the impinging wave; $\boldsymbol{\Psi}_{q,(k,l)} = (\delta_{q,(k,l)},\vartheta_{q,(k,l)})$ models the reflection direction from $m_{k,l}$ to $r_q$, where $\delta_{q,(k,l)}$ and $\vartheta_{q,(k,l)}$ denote the elevation and azimuth angles of the direction from $m_{k,l}$ to $r_q$, in the $x$-$y$-$z$ coordinate system, respectively. The directional angles described above are illustrated in Fig. \ref{PhysicsFig}. Under Assumption \ref{REFarFieldAssump}, the RE response function of $m_{k,l}$ for $t_p$ and $r_q$ in the far field (of $m_{k,l}$) is given by \cite[Eq. 1]{najafi_physics-based_2020}
\begin{equation}\label{gLimit}
    g(\boldsymbol{\Psi}_{(k,l),p},\boldsymbol{\Psi}_{q,(k,l)}) = \lim_{d_{q,(k,l)} \to \infty} \sqrt{4\pi d_{q,(k,l)}^2} e^{\frac{j2\pi d_{q,(k,l)}}{\lambda}} \frac{E_{\mathsf{r}}(\boldsymbol{\Psi}_{q,(k,l)})}{E_{\mathsf{t}}(\boldsymbol{\Psi}_{(k,l),p})},
\end{equation}
where $E_{\mathsf{t}}(\boldsymbol{\Psi}_{(k,l),p})$ is a phasor denoting the complex amplitude of the incident electric field from the direction $\boldsymbol{\Psi}_{(k,l),p}$ on the center of $m_{k,l}$, and $E_{\mathsf{r}}(\boldsymbol{\Psi}_{q,(k,l)})$ is a phasor denoting the complex amplitude of the reflected electric field at $r_q$, and $d_{q,(k,l)}$ is the distance from $r_q$ to the center of $m_{k,l}$. In \eqref{gLimit}, besides depending on the impinging wave $\boldsymbol{\Psi}_{(k,l),p}$ and the reflection direction $\boldsymbol{\Psi}_{q,(k,l)}$, $E_{\mathsf{r}}(\boldsymbol{\Psi}_{q,(k,l)})$ is also affected by the reflection coefficients configured on $m_{k,l}$. Denote by $\Gamma_{k,l}(x,y)=\tau_{k,l} e^{j \beta_{k,l}(x,y)}$ the reflection coefficient realized at point $(x,y)$ (with the center of $m_{k,l}$ as the origin, $x \in (-L_{\mathsf{x}}/2, L_{\mathsf{x}}/2)$, $y \in (-L_{\mathsf{y}}/2, L_{\mathsf{y}}/2)$) of $m_{k,l}$, where $\beta_{k,l}(x,y)$ is referred to as the phase shift function of $m_{k,l}$, and $\tau_{k,l}$ is the amplitude which is assumed to be constant across $m_{k,l}$, $k \in \mathcal{I}_{Q_{\mathsf{x}}},~ l \in \mathcal{I}_{Q_{\mathsf{y}}}$. For simplicity, we further assume that all REs are properly designed to achieve a common reflection coefficient amplitude $\tau$, i.e., $\tau_{k,l}=\tau$, $\forall k \in \mathcal{I}_{Q_{{\mathsf{x}}}}$, $\forall l \in \mathcal{I}_{Q_{{\mathsf{y}}}}$.

\blue{
From \cite[Proposition 1]{najafi_physics-based_2020}, 
assume that the RE $m_{k,l}$ is configured to reflect an EM wave impinging from a certain direction $\tilde{\boldsymbol{\Psi}}_{\mathsf{t};k,l}=(\tilde{\delta}_{\mathsf{t};k,l},\tilde{\vartheta}_{\mathsf{t};k,l},\tilde{\varsigma}_{\mathsf{t};k,l})$ towards another direction $\tilde{\boldsymbol{\Psi}}_{\mathsf{r};k,l}=(\tilde{\delta}_{\mathsf{r};k,l},\tilde{\vartheta}_{\mathsf{r};k,l})$, by imposing the linear phase-shift function
\begin{equation}\label{BetaklSum}
    \beta_{k,l}(x,y) = \beta_{k,l}(x;\tilde{\boldsymbol{\Psi}}_{\mathsf{t};k,l}, \tilde{\boldsymbol{\Psi}}_{\mathsf{r};k,l}) + \beta_{k,l}(y;\tilde{\boldsymbol{\Psi}}_{\mathsf{t};k,l}, \tilde{\boldsymbol{\Psi}}_{\mathsf{r};k,l}) + \beta_{k,l} - \frac{\pi}{2}
\end{equation}
with 
\begin{align}\label{PhaseShift}
    \beta_{k,l}(x;\tilde{\boldsymbol{\Psi}}_{\mathsf{t};k,l}, \tilde{\boldsymbol{\Psi}}_{\mathsf{r};k,l}) = -\frac{2\pi A_{\mathsf{x}} (\tilde{\boldsymbol{\Psi}}_{\mathsf{t};k,l}, \tilde{\boldsymbol{\Psi}}_{\mathsf{r};k,l})x}{\lambda}  ~ \textrm{and} ~ \beta_{k,l}(y;\tilde{\boldsymbol{\Psi}}_{\mathsf{t};k,l}, \tilde{\boldsymbol{\Psi}}_{\mathsf{r};k,l}) = -\frac{2\pi A_{\mathsf{y}} (\tilde{\boldsymbol{\Psi}}_{\mathsf{t};k,l}, \tilde{\boldsymbol{\Psi}}_{\mathsf{r};k,l})y}{\lambda} ,
\end{align}
where
\begin{align}
    A_{\mathsf{x}}(\tilde{\boldsymbol{\Psi}}_{\mathsf{t};k,l}, \tilde{\boldsymbol{\Psi}}_{\mathsf{r};k,l}) &\triangleq \sin(\tilde{\delta}_{\mathsf{t};k,l}) \cos(\tilde{\vartheta}_{\mathsf{t};k,l}) + \sin(\tilde{\delta}_{\mathsf{r};k,l}) \cos(\tilde{\vartheta}_{\mathsf{r};k,l}) ~ \textrm{and} \\
    A_{\mathsf{y}}(\tilde{\boldsymbol{\Psi}}_{\mathsf{t};k,l}, \tilde{\boldsymbol{\Psi}}_{\mathsf{r};k,l}) &\triangleq \sin(\tilde{\delta}_{\mathsf{t};k,l}) \sin(\tilde{\vartheta}_{\mathsf{t};k,l}) + \sin(\tilde{\delta}_{\mathsf{r};k,l}) \sin(\tilde{\vartheta}_{\mathsf{r};k,l}).
\end{align}
Then, for an incident wave radiated from the transmit antenna $t_p$ and observed at the receive antenna $r_q$, the amplitude of the RE response function is given by 
\begin{align}
    \left|g(\boldsymbol{\Psi}_{(k,l),p},\boldsymbol{\Psi}_{q,(k,l)})\right| &= \frac{\sqrt{4\pi}\tau  L_{\mathsf{x}} L_{\mathsf{y}}}{\lambda} \tilde{g}(\boldsymbol{\Psi}_{(k,l),p},\boldsymbol{\Psi}_{q,(k,l)}) \notag  \\
    &\times \textrm{sinc} \left( \frac{\pi L_{\mathsf{x}}[A_{\mathsf{x}}(\boldsymbol{\Psi}_{(k,l),p},\boldsymbol{\Psi}_{q,(k,l)})-A_{\mathsf{x}}(\tilde{\boldsymbol{\Psi}}_{\mathsf{t};k,l}, \tilde{\boldsymbol{\Psi}}_{\mathsf{r};k,l})] }{\lambda} \right) \label{gAmpMax}  \\
    &\times \textrm{sinc} \left( \frac{\pi L_{\mathsf{y}}[A_{\mathsf{y}}(\boldsymbol{\Psi}_{(k,l),p},\boldsymbol{\Psi}_{q,(k,l)})-A_{\mathsf{y}}(\tilde{\boldsymbol{\Psi}}_{\mathsf{t};k,l}, \tilde{\boldsymbol{\Psi}}_{\mathsf{r};k,l})] }{\lambda} \right), \notag
\end{align}
where $\textrm{sinc}(x) = \sin(x)/x$ and
\begin{multline}
    \tilde{g}(\boldsymbol{\Psi}_{(k,l),p},\boldsymbol{\Psi}_{q,(k,l)}) = c(\boldsymbol{\Psi}_{(k,l),p}) \times \\
    \left\|\left[\begin{array}{c}
    \cos \left(\varsigma_{(k,l),p}\right) \cos \left(\delta_{q,(k,l)}\right) \sin \left(\vartheta_{q,(k,l)}\right)-\sin \left(\varsigma_{(k,l),p}\right) \cos \left(\delta_{q,(k,l)}\right) \cos \left(\vartheta_{q,(k,l)}\right) \\
    \sin \left(\varsigma_{(k,l),p}\right) \sin \left(\vartheta_{q,(k,l)}\right)+\cos \left(\varsigma_{(k,l),p}\right) \cos \left(\vartheta_{q,(k,l)}\right)
    \end{array}\right]\right\|_{2},
\end{multline}
\begin{equation}
    c(\boldsymbol{\Psi}_{(k,l),p}) = \frac{A_{\mathsf{z}}\left(\boldsymbol{\Psi}_{\mathsf{t}}\right)}{\sqrt{A_{\mathsf{x,y}}^2 \left(\boldsymbol{\Psi}_{\mathsf{t}}\right) + A_{\mathsf{z}}^2 \left(\boldsymbol{\Psi}_{\mathsf{t}}\right)}}
\end{equation}
with $A_{\mathsf{z}}\left(\boldsymbol{\Psi}_{\mathsf{t}}\right) \triangleq \cos(\delta_{(k,l),p})$ and $A_{\mathsf{x,y}} \left(\boldsymbol{\Psi}_{\mathsf{t}}\right) \triangleq  \cos(\varsigma_{(k,l),p})\sin(\delta_{(k,l),p})\cos(\vartheta_{(k,l),p}) + \sin(\varsigma_{(k,l),p}) \allowbreak \sin(\delta_{(k,l),p})  \allowbreak \sin(\vartheta_{(k,l),p})$, $k \in \mathcal{I}_{Q_{\mathsf{x}}},~ l \in \mathcal{I}_{Q_{\mathsf{y}}}$, $p \in \mathcal{I}_{N_{\mathsf{t}}},~ q \in \mathcal{I}_{N_{\mathsf{r}}}$. The phase of the RE response function is given by 
\begin{equation}\label{REPHASE}
    \angle g(\boldsymbol{\Psi}_{(k,l),p},\boldsymbol{\Psi}_{q,(k,l)}) = \beta_{k,l},~\forall p \in \mathcal{I}_{N_{\mathsf{t}}},~ \forall q \in \mathcal{I}_{N_{\mathsf{r}}}.
\end{equation}

In the remainder of this paper, we always assume that all the REs apply the phase shift functions of the form \eqref{BetaklSum} with $\tilde{\boldsymbol{\Psi}}_{\mathsf{t};k,l} = \boldsymbol{\Psi}_{(k,l),0}$ and $\tilde{\boldsymbol{\Psi}}_{\mathsf{r};k,l} = \boldsymbol{\Psi}_{0,(k,l)}$, $\forall k \in \mathcal{I}_{Q_{{\mathsf{x}}}}$, $\forall l \in \mathcal{I}_{Q_{{\mathsf{y}}}}$.
This enables all the REs to achieve the maximum response amplitude for the wave from the transmit antenna $t_0$ and reflected to the receive antenna $r_0$, with $\left|g(\boldsymbol{\Psi}_{(k,l),0},\boldsymbol{\Psi}_{0,(k,l)})\right| = \frac{\sqrt{4\pi}\tau  L_{\mathsf{x}} L_{\mathsf{y}}}{\lambda} \tilde{g}(\boldsymbol{\Psi}_{(k,l),0},\boldsymbol{\Psi}_{0,(k,l)})$, $k \in \mathcal{I}_{Q_{\mathsf{x}}},~ l \in \mathcal{I}_{Q_{\mathsf{y}}}$. 
Denote by $g_0$ the response of $m_{0,0}$ to the impinging wave from $t_0$ to $r_0$, i.e., $|g_0| = \frac{\sqrt{4\pi}\tau L_{\mathsf{x}} L_{\mathsf{y}}}{\lambda}\tilde{g}(\boldsymbol{\Psi}_{(0,0),0},\boldsymbol{\Psi}_{0,(0,0)})= \frac{\sqrt{4\pi}\tau L_{\mathsf{x}} L_{\mathsf{y}}}{\lambda}\tilde{g}((\varphi_{\mathsf{t}},\omega_{\mathsf{t}},\varsigma_{(0,0),0}),(\varphi_{\mathsf{r}},\psi_{\mathsf{r}}))$, $\angle g_0 = \beta_{0,0}$.
}

To further simplify the channel model, we make the following assumption:
\begin{assump}\label{SizeAssump}
    The distance between the centers of the Tx and the IRS is much greater than the sizes of the Tx and the IRS, and the distance between the centers of the Rx and the IRS is much greater than the sizes of the Rx and the IRS, i.e., $D_{\mathsf{t}} \gg L_{\mathsf{t}}, L_{\mathsf{x}}^{\mathsf{tot}}, L_{\mathsf{y}}^{\mathsf{tot}} $, and $D_{\mathsf{r}} \gg L_{\mathsf{r}}, L_{\mathsf{x}}^{\mathsf{tot}}, L_{\mathsf{y}}^{\mathsf{tot}}$.
\end{assump}
Under Assumption \ref{SizeAssump}, in all the $t_p$-$m_{k,l}$-$r_q$ links, the amplitudes of the RE response functions are approximately the same as $|g_0|$, i.e.,
\begin{equation}\label{CommonREAmp}
    |g(\boldsymbol{\Psi}_{(k,l),p},\boldsymbol{\Psi}_{q,(k,l)})|=|g_0|,~ \forall k \in \mathcal{I}_{Q_{\mathsf{x}}},~ \forall l \in \mathcal{I}_{Q_{\mathsf{y}}},~ \forall p \in \mathcal{I}_{N_{\mathsf{t}}},~ \forall q \in \mathcal{I}_{N_{\mathsf{r}}}.
\end{equation}
This approximation is based on the fact the amplitude of $g(\boldsymbol{\Psi}_{(k,l),p},\boldsymbol{\Psi}_{q,(k,l)})$ is insensitive to the variations in $\boldsymbol{\Psi}_{(k,l),p}$ and $\boldsymbol{\Psi}_{q,(k,l)}$, i.e., under Assumption \ref{SizeAssump}, the differences of $\boldsymbol{\Psi}_{(k,l),p}$ are small, and so are the differences of $\boldsymbol{\Psi}_{q,(k,l)}$, $k \in \mathcal{I}_{Q_{\mathsf{x}}},~ l \in \mathcal{I}_{Q_{\mathsf{y}}}$, $p \in \mathcal{I}_{N_{\mathsf{t}}},~ q \in \mathcal{I}_{N_{\mathsf{r}}}$.

\blue{
As a justification, denote by $\xi_{g} \triangleq \frac{\left\|\left|g(\mathbf{\Psi}_{(k,l),p},\mathbf{\Psi}_{q,(k,l)})\right|-|g_0| \right\|_{\max}}{|g_0|}$ as the maximum variation of $\left|g(\mathbf{\Psi}_{(k,l),p},\mathbf{\Psi}_{q,(k,l)})\right|$ (over $|g_0|$), where the maximization is taken over all the transmit/receive antenna and RE indices, $k \in \mathcal{I}_{Q_{\mathsf{x}}},~ l \in \mathcal{I}_{Q_{\mathsf{y}}}$, $p \in \mathcal{I}_{N_{\mathsf{t}}},~ q \in \mathcal{I}_{N_{\mathsf{r}}}$.
Fig. \ref{VARG} illustrates the change of $\xi_g$ over the link distances. The Tx/Rx directions and orientations are fixed to $\omega_{\mathsf{t}}=\frac{3\pi}{2}$, $\varphi_{\mathsf{t}}=\frac{\pi}{4}$, $\gamma_{\mathsf{t}}=0$, $\psi_{\mathsf{t}}=\frac{\pi}{2}$, $\omega_{\mathsf{r}}=\frac{\pi}{2}$, $\varphi_{\mathsf{r}}=\frac{\pi}{6}$, $\gamma_{\mathsf{r}}=0$, $\psi_{\mathsf{r}}=\frac{\pi}{2}$. $N_{\mathsf{t}}=N_{\mathsf{r}}=5$. Each RE is of size $2\textrm{cm} \times 2\textrm{cm}$. The IRS contains $15\times15=225$ REs and is of size $0.3\textrm{m} \times 0.3\textrm{m}$. For simplicity, we consider the case that the Tx-IRS and IRS-Rx distances are equal, i.e., $D_{\mathsf{t}}=D_{\mathsf{r}}$. For different $t_p$-$m_{k,l}$-$r_q$ links, the differences of incident and reflection angles at different REs are taken into account. From Fig. \ref{VARG}, we observe that for the carrier frequencies $f=75$GHz, $f=140$GHz, and $f=338$GHz, $\xi_g$ drops below $10^{-1}$ at $D_{\mathsf{t}}(=D_{\mathsf{r}})=2.1$m, $D_{\mathsf{t}}(=D_{\mathsf{r}})=2.5$m, and $D_{\mathsf{t}}(=D_{\mathsf{r}})=4.2$m, respectively. These results show that the approximation for $\left|g(\mathbf{\Psi}_{(k,l),p},\mathbf{\Psi}_{q,(k,l)})\right|$ to $|g_0|$ is valid for Tx and Rx several meters away from the IRS, $\forall k \in \mathcal{I}_{Q_{\mathsf{x}}},~ \forall l \in \mathcal{I}_{Q_{\mathsf{y}}},~ \forall p \in \mathcal{I}_{N_{\mathsf{t}}},~ \forall q \in \mathcal{I}_{N_{\mathsf{r}}}$.
}


\begin{remark}
    Assumption \ref{SizeAssump} does not necessarily mean that the IRS is in the far field of the Tx and the Rx. For an IRS of the size $L_{\mathsf{x}}^{\mathsf{tot}} \times L_{\mathsf{y}}^{\mathsf{tot}}$, the boundary of its far field and near field is given by $B_{\mathsf{IRS}} = 2 \frac{\left(L_{\mathsf{x}}^{\mathsf{tot}}\right)^2+\left(L_{\mathsf{y}}^{\mathsf{tot}}\right)^2}{\lambda}$ \cite[Eq. 4.9]{huang_antennas_nodate}. For $L_{\mathsf{x}}^{\mathsf{tot}}=L_{\mathsf{y}}^{\mathsf{tot}}=0.3\textup{m}$ and $f=140\textup{GHz}$ (with $\lambda=0.0021\textup{m}$), $B_{\rm{IRS}}=168.1\textup{m}$. However, from Fig. \refeq{VARG}, we see that $D_{\mathsf{t}}=D_{\mathsf{r}}=2.5\textup{m}$ is sufficiently large to ensure the approximation in \eqref{CommonREAmp}.
\end{remark}


\subsection{Link Distance}
The path loss and phase shift of the EM wave experienced in each cascaded LoS link are related to the link distance. Thus we need to describe the link distance first. Denote by $d_{(k,l),p}$ the distance from $t_p$ to $m_{k,l}$, and by $d_{q,(k,l)}$ the distance from $m_{k,l}$ to $r_q$, $k \in \mathcal{I}_{Q_{\mathsf{x}}},~ l \in \mathcal{I}_{Q_{\mathsf{y}}}$, $p \in \mathcal{I}_{N_{\mathsf{t}}},~ q \in \mathcal{I}_{N_{\mathsf{r}}}$.
Based on the axes and the parameters defined in Section \ref{SecSystem}, the coordinates of the Tx antenna $t_p$, the Rx antenna $r_q$, and the center of the RE $m_{k,l}$ can be written into explicit expressions. $d_{(k,l),p}$ and $d_{q,(k,l)}$ can be analytically derived by calculating the Euclidean distances between $m_{k,l}$ and $t_p$ and between $m_{k,l}$ and $r_q$, respectively. Define $\mathbf{a}_{{\mathsf{t}},p}$ as the vector from the origin to $t_p$, $\mathbf{a}_{{\mathsf{r}},q}$ as the vector from the origin to $r_q$, and $\mathbf{a}_{k,l}$ as the vector from the origin to the center of $m_{k,l}$. From Fig. \ref{SysFig}, these vectors are given by
\begin{subequations}
    \begin{align}
        \mathbf{a}_{{\mathsf{t}},p} &= (pd_{\mathsf{t}}\sin \psi_{\mathsf{t}} \cos\gamma_{\mathsf{t}})\mathbf{n}_{x_{\mathsf{t}}} + (pd_{\mathsf{t}}\sin \psi_{\mathsf{t}} \sin\gamma_{\mathsf{t}})\mathbf{n}_{y_{\mathsf{t}}} + (D_{\mathsf{t}}+pd_{\mathsf{t}}\cos \psi_{\mathsf{t}})\mathbf{n}_{z_{\mathsf{t}}},~ p \in \mathcal{I}_{p} \label{ant} \\
        \mathbf{a}_{{\mathsf{r}},q} &= (qd_{\mathsf{r}}\sin \psi_{\mathsf{r}} \cos\gamma_{\mathsf{r}})\mathbf{n}_{x_{\mathsf{r}}} + (qd_{\mathsf{r}}\sin \psi_{\mathsf{r}} \sin\gamma_{\mathsf{r}})\mathbf{n}_{y_{\mathsf{r}}} + (D_{\mathsf{r}}+qd_{\mathsf{r}}\cos \psi_{\mathsf{r}})
        \mathbf{n}_{z_{\mathsf{r}}},~ q \in \mathcal{I}_{q} \label{anr} \\
        \mathbf{a}_{k,l} & = k S_{\mathsf{x}} \mathbf{n}_x + l S_{\mathsf{y}} \mathbf{n}_y \\
        & = (k S_{\mathsf{x}} \sin \omega_{\mathsf{t}}-lS_{\mathsf{y}}\cos \omega_{\mathsf{t}})\mathbf{n}_{x_{\mathsf{t}}} + (kS_{\mathsf{x}}\cos \varphi_{\mathsf{t}}\cos \omega_{\mathsf{t}} + lS_{\mathsf{y}}\cos \varphi_{\mathsf{t}}\sin \omega_{\mathsf{t}})\mathbf{n}_{y_{\mathsf{t}}} \notag \\
        & \quad +  (kS_{\mathsf{x}}\sin \varphi_{\mathsf{t}} \cos \omega_{\mathsf{t}} + lS_{\mathsf{y}}\sin \varphi_{\mathsf{t}} \sin \omega_{\mathsf{t}})\mathbf{n}_{z_{\mathsf{t}}},~ k\in \mathcal{I}_{Q_{\mathsf{x}}},~ l \in \mathcal{I}_{Q_{\mathsf{y}}} \label{aijt} \\
        & = (k S_{\mathsf{x}} \sin \omega_{\mathsf{r}}-lS_{\mathsf{y}}\cos \omega_{\mathsf{r}})\mathbf{n}_{x_{\mathsf{r}}} + (kS_{\mathsf{x}}\cos \varphi_{\mathsf{r}}\cos \omega_{\mathsf{r}} + lS_{\mathsf{y}}\cos \varphi_{\mathsf{r}}\sin \omega_{\mathsf{r}})\mathbf{n}_{y_{\mathsf{r}}} \notag \\
        & \quad +  (kS_{\mathsf{x}}\sin \varphi_{\mathsf{r}} \cos \omega_{\mathsf{r}} + lS_{\mathsf{y}}\sin \varphi_{\mathsf{r}} \sin \omega_{\mathsf{r}})\mathbf{n}_{z_{\mathsf{r}}},~ k\in \mathcal{I}_{Q_{\mathsf{x}}},~ l \in \mathcal{I}_{Q_{\mathsf{y}}}, \label{aijr}
    \end{align}
\end{subequations}
where $\mathbf{n}_x$, $\mathbf{n}_y$, $\mathbf{n}_z$, $\mathbf{n}_{x_{\mathsf{t}}}$, $\mathbf{n}_{y_{\mathsf{t}}}$, $\mathbf{n}_{z_{\mathsf{t}}}$, $\mathbf{n}_{x_{\mathsf{r}}}$, $\mathbf{n}_{y_{\mathsf{r}}}$, and $\mathbf{n}_{z_{\mathsf{r}}}$ are the unit vectors of the $x$, $y$, $z$, $x_{\mathsf{t}}$, $y_{\mathsf{t}}$, $z_{\mathsf{t}}$, $x_{\mathsf{r}}$, $y_{\mathsf{r}}$, and $z_{\mathsf{r}}$ axes, respectively. Based on \eqref{ant} and \eqref{aijt}, $d_{(k,l),p}$ is given by
\begin{subequations}\label{dklpORI}
    \begin{align}
        d_{(k,l),p} & = ||\mathbf{a}_{{\mathsf{t}},p} - \mathbf{a}_{k,l}|| \\
        &= \left[(pd_{\mathsf{t}}\sin \psi_{\mathsf{t}} \cos\gamma_{\mathsf{t}}-k S_{\mathsf{x}} \sin \omega_{\mathsf{t}}+lS_{\mathsf{y}}\cos \omega_{\mathsf{t}})^2 \right. \notag \\
        & \quad \left. + (pd_{\mathsf{t}}\sin \psi_{\mathsf{t}} \sin\gamma_{\mathsf{t}}-kS_{\mathsf{x}}\cos \varphi_{\mathsf{t}}\cos \omega_{\mathsf{t}} - lS_{\mathsf{y}}\cos \varphi_{\mathsf{t}}\sin \omega_{\mathsf{t}})^2 \right. \notag \\
        & \quad \left. + (D_{\mathsf{t}} +pd_{\mathsf{t}}\cos\psi_{\mathsf{t}}-kS_{\mathsf{x}}\sin \varphi_{\mathsf{t}} \cos \omega_{\mathsf{t}} - lS_{\mathsf{y}}\sin \varphi_{\mathsf{t}} \sin \omega_{\mathsf{t}})^2\right]^{\frac{1}{2}} \label{Exactklp} \\
        &\approx \frac{(pd_{\mathsf{t}}\sin \psi_{\mathsf{t}} \cos\gamma_{\mathsf{t}}-k S_{\mathsf{x}} \sin \omega_{\mathsf{t}}+lS_{\mathsf{y}}\cos \omega_{\mathsf{t}})^2}{2(D_{\mathsf{t}} +pd_{\mathsf{t}}\cos\psi_{\mathsf{t}}-kS_{\mathsf{x}}\sin \varphi_{\mathsf{t}} \cos \omega_{\mathsf{t}} - lS_{\mathsf{y}}\sin \varphi_{\mathsf{t}} \sin \omega_{\mathsf{t}})} \notag \\
        & \quad + \frac{(pd_{\mathsf{t}}\sin \psi_{\mathsf{t}} \sin\gamma_{\mathsf{t}}-kS_{\mathsf{x}}\cos \varphi_{\mathsf{t}}\cos \omega_{\mathsf{t}} - lS_{\mathsf{y}}\cos \varphi_{\mathsf{t}}\sin \omega_{\mathsf{t}})^2}{2(D_{\mathsf{t}} +pd_{\mathsf{t}}\cos\psi_{\mathsf{t}}-kS_{\mathsf{x}}\sin \varphi_{\mathsf{t}} \cos \omega_{\mathsf{t}} - lS_{\mathsf{y}}\sin \varphi_{\mathsf{t}} \sin \omega_{\mathsf{t}})} \notag \\
        & \quad + D_{\mathsf{t}} +pd_{\mathsf{t}}\cos\psi_{\mathsf{t}}-kS_{\mathsf{x}}\sin \varphi_{\mathsf{t}} \cos \omega_{\mathsf{t}} - lS_{\mathsf{y}}\sin \varphi_{\mathsf{t}} \sin \omega_{\mathsf{t}}, \label{Approxklpmid} \\
        &\approx \frac{(pd_{\mathsf{t}}\sin \psi_{\mathsf{t}} \cos\gamma_{\mathsf{t}}-k S_{\mathsf{x}} \sin \omega_{\mathsf{t}}+lS_{\mathsf{y}}\cos \omega_{\mathsf{t}})^2}{2D_{\mathsf{t}}} \notag \\
        & \quad + \frac{(pd_{\mathsf{t}}\sin \psi_{\mathsf{t}} \sin\gamma_{\mathsf{t}}-kS_{\mathsf{x}}\cos \varphi_{\mathsf{t}}\cos \omega_{\mathsf{t}} - lS_{\mathsf{y}}\cos \varphi_{\mathsf{t}}\sin \omega_{\mathsf{t}})^2}{2D_{\mathsf{t}}} \notag \\
        & \quad + D_{\mathsf{t}} +pd_{\mathsf{t}}\cos\psi_{\mathsf{t}}-kS_{\mathsf{x}}\sin \varphi_{\mathsf{t}} \cos \omega_{\mathsf{t}} - lS_{\mathsf{y}}\sin \varphi_{\mathsf{t}} \sin \omega_{\mathsf{t}}, \label{Approxklp}
    \end{align}
\end{subequations}
where \eqref{Exactklp} gives the exact path length, and \eqref{Approxklpmid} is obtained by a Taylor series expansion. The two approximations in \eqref{Approxklpmid} and \eqref{Approxklp} are valid when $D_{\mathsf{t}} \gg L_{\mathsf{t}}, L_{\mathsf{x}}^{\mathsf{tot}}, L_{\mathsf{y}}^{\mathsf{tot}} $, which is guaranteed by Assumption \ref{SizeAssump}.\footnotemark
\footnotetext{Similar approximations have been adopted in modelling conventional MIMO channels; see, e.g., \cite{bohagen_construction_2005,bohagen_design_2007,bohagen_optimal_2007,bohagen_spherical_2009,wang_tens_2014}.} 
Similarly, based on \eqref{anr} and \eqref{aijr}, we obtain
\begin{subequations}\label{dqklORI}
    \begin{align}
        d_{q,(k,l)} & = ||\mathbf{a}_{{\mathsf{r}},q} - \mathbf{a}_{k,l}|| \\
        &\approx \frac{(qd_{\mathsf{r}}\sin \psi_{\mathsf{r}} \cos\gamma_{\mathsf{r}}-k S_{\mathsf{x}} \sin \omega_{\mathsf{r}}+lS_{\mathsf{y}}\cos \omega_{\mathsf{r}})^2}{2D_{\mathsf{r}}} \notag \\
        & \quad + \frac{(qd_{\mathsf{r}}\sin \psi_{\mathsf{r}} \sin\gamma_{\mathsf{r}}-kS_{\mathsf{x}}\cos \varphi_{\mathsf{r}}\cos \omega_{\mathsf{r}} - lS_{\mathsf{y}}\cos \varphi_{\mathsf{r}}\sin \omega_{\mathsf{r}})^2}{2D_{\mathsf{r}}} \notag \\
        & \quad + D_{\mathsf{r}} +qd_{\mathsf{r}}\cos\psi_{\mathsf{r}}-kS_{\mathsf{x}}\sin \varphi_{\mathsf{r}} \cos \omega_{\mathsf{r}} - lS_{\mathsf{y}}\sin \varphi_{\mathsf{r}} \sin \omega_{\mathsf{r}}, \label{Approxqkl}
    \end{align}
\end{subequations}
$k \in \mathcal{I}_{Q_{\mathsf{x}}},~ l \in \mathcal{I}_{Q_{\mathsf{y}}},~ q \in \mathcal{I}_{N_{\mathsf{r}}}$.

\subsection{Path Loss and Phase Shift Model}
With the link distances given in \eqref{dklpORI} and \eqref{dqklORI}, we are now ready to describe the path loss and path phase shift model.
Combining the discussions in the preceding subsections and based on \cite[Lemma 1]{najafi_physics-based_2020}, the total path loss of the $t_p$-$m_{k,l}$-$r_q$ link is given by
\begin{equation}\label{PL}
    \textrm{PL}_{q,(k,l),p} = \frac{4\pi \left|g(\boldsymbol{\Psi}_{(k,l),p},\boldsymbol{\Psi}_{q,(k,l)})\right|^2}{\lambda^2} \textrm{PL}_{(k,l),p}\textrm{PL}_{q,(k,l)},
\end{equation}
$k \in \mathcal{I}_{Q_{\mathsf{x}}},~ l \in \mathcal{I}_{Q_{\mathsf{y}}},~ p \in \mathcal{I}_{N_{\mathsf{t}}},~ q \in \mathcal{I}_{N_{\mathsf{r}}}$.
\blue{In \eqref{PL}, $\textrm{PL}_{(k,l),p} \triangleq \left(\frac{\lambda}{4\pi d_{(k,l),p}}\right)^2 e^{-\kappa_{\textrm{abs}}\left( f \right) d_{(k,l),p} }$ and $\textrm{PL}_{q,(k,l)} \triangleq \left(\frac{\lambda}{4\pi d_{q,(k,l)}}\right)^2 e^{-\kappa_{\textrm{abs}}\left( f \right) d_{q,(k,l)} }$ stand for the path losses of the LoS $t_p$-$m_{k,l}$ and $m_{k,l}$-$r_q$ channels, respectively, where the fractional terms and the exponential terms represent the free space attenuations and the molecular absorption losses, respectively.
$d_{(k,l),p}$ and $d_{q,(k,l)}$ are the distances from $m_{k,l}$ to $t_p$ and $r_q$, respectively. $\kappa_{\textrm{abs}}(f)$ is the molecular absorption coefficient at the carrier frequency $f$ and can be calculated based on the high resolution transmission molecular absorption database (HITRAN) \cite{jornet_channel_2011,gordon_hitran2020_2022}. } Under Assumption \ref{SizeAssump}, all the LoS path loss $\textrm{PL}_{(k,l),p}$ are approximately the same as $\textrm{PL}_{(0,0),0}$, and all the LoS path loss $\textrm{PL}_{q,(k,l)}$ are approximately the same as $\textrm{PL}_{0,(0,0)}$, i.e., 
\begin{equation}\label{LoSPLApprox}
    \textrm{PL}_{(k,l),p} = \textrm{PL}_{(0,0),0},~ \textrm{PL}_{q,(k,l)} = \textrm{PL}_{0,(0,0)},
\end{equation}
$\forall k \in \mathcal{I}_{Q_{\mathsf{x}}},~ \forall l \in \mathcal{I}_{Q_{\mathsf{y}}},~ \forall p \in \mathcal{I}_{N_{\mathsf{t}}},~ \forall q \in \mathcal{I}_{N_{\mathsf{r}}}$. Similar approximations for the LoS path loss have been adopted when the size of the antenna array is much smaller relative to the link distance; see, e.g., \cite{bohagen_construction_2005,bohagen_design_2007,bohagen_optimal_2007,bohagen_spherical_2009,wang_tens_2014}. 

\blue{
From \eqref{CommonREAmp} and \eqref{LoSPLApprox}, the total path losses of all the Tx-IRS-Rx links reduce to a constant over the indices $k$, $l$, $p$, $q$, i.e.,
\begin{equation}\label{CommonPL}
    \textrm{PL}_{q,(k,l),p} = \eta_0^2,
\end{equation}
$\forall k \in \mathcal{I}_{Q_{\mathsf{x}}},~ \forall l \in \mathcal{I}_{Q_{\mathsf{y}}},~ \forall p \in \mathcal{I}_{N_{\mathsf{t}}},~ \forall q \in \mathcal{I}_{N_{\mathsf{r}}}$, where
\begin{subequations}
    \begin{align}
        \eta_0 &\triangleq \sqrt{\frac{4\pi \left|g_0 \right|^2}{\lambda^2}\textrm{PL}_{(0,0),0}\textrm{PL}_{0,(0,0)}} \\
        &=\frac{\tau L_{\mathsf{x}} L_{\mathsf{y}}}{4\pi D_{\mathsf{t}}D_{\mathsf{r}}} \tilde{g}((\varphi_{\mathsf{t}},\omega_{\mathsf{t}},\varsigma_{(0,0),0}),(\varphi_{\mathsf{r}},\psi_{\mathsf{r}})) e^{-\frac{1}{2} \kappa_{\textrm{abs}}(f)[D_{\mathsf{t}} + D_{\mathsf{r}}]}.
    \end{align}
\end{subequations}

}

The received signal via the $t_p$-$m_{k,l}$-$r_q$ link experiences a total path phase shift of
\begin{equation}\label{totalpps}
    \zeta_{q,(k,l),p} \triangleq \zeta_{(k,l),p} + \zeta_{q,(k,l)} + \angle g(\boldsymbol{\Psi}_{(k,l),p},\boldsymbol{\Psi}_{q,(k,l)}),
\end{equation}
where
\begin{align}\label{ZETA}
    \zeta_{(k,l),p} \triangleq \frac{2\pi d_{(k,l),p}}{\lambda} ~ \textrm{and} ~
    \zeta_{q,(k,l)} \triangleq \frac{2\pi d_{q,(k,l)}}{\lambda}
\end{align}
are the LoS path phase shifts of the links $t_p$-$m_{k,l}$ and $m_{k,l}$-$r_q$, respectively, $\angle g(\boldsymbol{\Psi}_{(k,l),p},\boldsymbol{\Psi}_{q,(k,l)}) = \beta_{k,l}$ is the phase of the RE response $g(\boldsymbol{\Psi}_{(k,l),p},\boldsymbol{\Psi}_{q,(k,l)})$, 
$k \in \mathcal{I}_{Q_{\mathsf{x}}},~ l \in \mathcal{I}_{Q_{\mathsf{y}}},~ p \in \mathcal{I}_{N_{\mathsf{t}}},~ q \in \mathcal{I}_{N_{\mathsf{r}}}$. 



\subsection{Channel Representation}
The considered IRS-aided MIMO system can be generally described by 
\begin{equation}\label{LinearModel}
    \boldsymbol{y} = \boldsymbol{H}\boldsymbol{x} + \boldsymbol{w},
\end{equation}
where $\boldsymbol{H}\in \mathbb{C}^{N_{\mathsf{r}} \times N_{\mathsf{t}}}$ denotes the overall channel coefficient matrix; $\boldsymbol{x}\in \mathbb{C}^{N_{\mathsf{t}}}$ denotes the transmit signal vector; $\boldsymbol{y} \in \mathbb{C}^{N_{\mathsf{r}}}$ denotes the receive signal vector; $\boldsymbol{w}\in \mathbb{C}^{N_{\mathsf{r}}}$ denotes the circularly symmetric complex Gaussian noise that follows $\mathcal{CN}(\boldsymbol{0},\sigma_w^2 \boldsymbol{I})$. The element in the $\left(q+\frac{N_{\mathsf{r}}+1}{2}\right)$-th row and the $\left(p+\frac{N_{\mathsf{t}}+1}{2}\right)$-th column of $\boldsymbol{H}$, denoted by $h_{q,p}$, represents the overall channel between the Tx antenna $t_p$ and the Rx antenna $r_q$, $p \in \mathcal{I}_{N_{\mathsf{t}}},~ q \in \mathcal{I}_{N_{\mathsf{r}}}$. 

We now represent the channel matrix $\boldsymbol{H}$ based on the total path loss and the phase shift model described in the preceding two subsections. Specifically, denote by
\begin{subequations}\label{hklp}
    \begin{align}
        h_{(k,l),p} &\triangleq e^{-j\zeta_{(k,l),p}} \\
        &=
        \begin{multlined}[t]\label{hklpb}
            \exp \left\{ -j \left( \frac{\pi(pd_{\mathsf{t}}\sin \psi_{\mathsf{t}} \cos\gamma_{\mathsf{t}}-k S_{\mathsf{x}} \sin \omega_{\mathsf{t}}+lS_{\mathsf{y}}\cos \omega_{\mathsf{t}})^2}{\lambda D_{\mathsf{t}}} \right. \right. \\
            \left. \left. \quad + \frac{\pi(pd_{\mathsf{t}}\sin \psi_{\mathsf{t}} \sin\gamma_{\mathsf{t}}-kS_{\mathsf{x}}\cos \varphi_{\mathsf{t}}\cos \omega_{\mathsf{t}} - lS_{\mathsf{y}}\cos \varphi_{\mathsf{t}}\sin \omega_{\mathsf{t}})^2}{\lambda D_{\mathsf{t}}} \right. \right. \\
            \left. \left. \quad + \frac{2\pi\left(D_{\mathsf{t}} +pd_{\mathsf{t}}\cos\psi_{\mathsf{t}}-kS_{\mathsf{x}}\sin \varphi_{\mathsf{t}} \cos \omega_{\mathsf{t}} - lS_{\mathsf{y}}\sin \varphi_{\mathsf{t}} \sin \omega_{\mathsf{t}}\right)}{\lambda} \right) \right\}
        \end{multlined}
    \end{align}
\end{subequations}
and 
\begin{subequations}\label{hqkl}
    \begin{align}
        h_{q,(k,l)} &\triangleq e^{-j\zeta_{q,(k,l)}} \\
        &=
        \begin{multlined}[t]\label{hqklb}
            \exp \left\{ -j \left( \frac{\pi(qd_{\mathsf{r}}\sin \psi_{\mathsf{r}} \cos\gamma_{\mathsf{r}}-k S_{\mathsf{x}} \sin \omega_{\mathsf{r}}+lS_{\mathsf{y}}\cos \omega_{\mathsf{r}})^2}{\lambda D_{\mathsf{r}}} \right. \right. \\
            \left. \left. \quad + \frac{\pi(qd_{\mathsf{r}}\sin \psi_{\mathsf{r}} \sin\gamma_{\mathsf{r}}-kS_{\mathsf{x}}\cos \varphi_{\mathsf{r}}\cos \omega_{\mathsf{r}} - lS_{\mathsf{y}}\cos \varphi_{\mathsf{r}}\sin \omega_{\mathsf{r}})^2}{\lambda D_{\mathsf{r}}} \right. \right. \\
            \left. \left. \quad + \frac{2\pi\left(D_{\mathsf{r}} +qd_{\mathsf{r}}\cos\psi_{\mathsf{r}}-kS_{\mathsf{x}}\sin \varphi_{\mathsf{r}} \cos \omega_{\mathsf{r}} - lS_{\mathsf{y}}\sin \varphi_{\mathsf{r}} \sin \omega_{\mathsf{r}}\right)}{\lambda} \right) \right\}
        \end{multlined}
    \end{align}
\end{subequations}
the normalized LoS channels between the Tx antenna $t_p$ and the RE $m_{k,l}$ and between $m_{k,l}$ and the Rx antenna $r_q$, respectively, $k \in \mathcal{I}_{Q_{\mathsf{x}}},~ l \in \mathcal{I}_{Q_{\mathsf{y}}},~ p \in \mathcal{I}_{N_{\mathsf{t}}},~ q \in \mathcal{I}_{N_{\mathsf{r}}}$. \eqref{hklpb} and \eqref{hqklb} are based on \eqref{ZETA} and the link distances in \eqref{Approxklp} and \eqref{Approxqkl}. Let
$\boldsymbol{H}_{\mathsf{t}} \in \mathbb{C}^{Q_{\mathsf{x}}Q_{\mathsf{y}}\times N_{\mathsf{t}}}$ and $\boldsymbol{H}_{\mathsf{r}} \in \mathbb{C}^{N_{\mathsf{r}}\times Q_{\mathsf{x}}Q_{\mathsf{y}}}$
be the normalized LoS Tx-IRS and IRS-Rx channels, respectively, where the element in the $\left(\left(k+\frac{Q_{\mathsf{x}}-1}{2}\right)Q_{\mathsf{y}}+\frac{Q_{\mathsf{y}}+1}{2}+l\right)$-th row and the $\left(p+\frac{N_{\mathsf{t}}+1}{2}\right)$-th column of $\boldsymbol{H}_{\mathsf{t}}$ is given by $h_{(k,l),p}$, and the element in the $\left(q+\frac{N_{\mathsf{r}}+1}{2}\right)$-th row and the $\left(\left(k+\frac{Q_{\mathsf{x}}-1}{2}\right)Q_{\mathsf{y}}+\frac{Q_{\mathsf{y}}+1}{2}+l\right)$-th column of $\boldsymbol{H}_{\mathsf{r}}$ is given by $h_{q,(k,l)}$, $k \in \mathcal{I}_{Q_{\mathsf{x}}},~ l \in \mathcal{I}_{Q_{\mathsf{y}}},~ p \in \mathcal{I}_{N_{\mathsf{t}}},~ q \in \mathcal{I}_{N_{\mathsf{r}}}$. 
Based on \eqref{REPHASE}, denote by
\begin{equation}\label{thetakl}
    \theta_{k,l}\triangleq e^{j\beta_{k,l}}
\end{equation}
the phase shift imposed by the RE $m_{k,l}$, and define a diagonal matrix
\begin{equation}\label{PSI}
    \boldsymbol{\Theta} \triangleq \textrm{diag}\left\{ \left[ \theta_{-\frac{Q_{\mathsf{x}}-1}{2},-\frac{Q_{\mathsf{y}}-1}{2}},\ldots, \theta_{\frac{Q_{\mathsf{x}}-1}{2},\frac{Q_{\mathsf{y}}-1}{2}} \right] \right\}
\end{equation}
 that accounts for the phase shifts imposed by all the REs of the IRS, with $\theta_{k,l}$ in the $\Big(\left(k+\frac{Q_{\mathsf{x}}-1}{2}\right)\allowbreak Q_{\mathsf{y}}+ \frac{Q_{\mathsf{y}}+1}{2} + l \Big)$-th diagonal position, $k \in \mathcal{I}_{Q_{\mathsf{x}}},~ l \in \mathcal{I}_{Q_{\mathsf{y}}}$.
Based on the above expressions, the IRS-aided MIMO channel matrix is given by the following proposition:
\begin{prop}\label{ChannelProp}
    Under Assumptions \ref{REFarFieldAssump} and \ref{SizeAssump}, the overall cascaded LoS MIMO channel $\boldsymbol{H}$ is 
    \begin{equation}\label{H}
        \boldsymbol{H} = \eta_0 \boldsymbol{H}_{\mathsf{r}} \boldsymbol{\Theta} \boldsymbol{H}_{\mathsf{t}}.
    \end{equation}
\end{prop}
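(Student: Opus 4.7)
The plan is to build the $(q,p)$ entry of $\boldsymbol{H}$ directly from physical principles by summing the contributions of all $Q_{\mathsf{x}} Q_{\mathsf{y}}$ Tx-IRS-Rx paths, and then recognize the resulting double sum as a matrix triple product. Concretely, since the direct $t_p$-$r_q$ path is blocked, the signal from $t_p$ to $r_q$ is the superposition of the signals that arrive via each RE $m_{k,l}$. For a single such path, the complex channel gain is a product of three factors: the attenuation/phase of the $t_p$-$m_{k,l}$ segment, the RE response function $g(\boldsymbol{\Psi}_{(k,l),p},\boldsymbol{\Psi}_{q,(k,l)})$, and the attenuation/phase of the $m_{k,l}$-$r_q$ segment. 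Writing the magnitude of this product in terms of the LoS path losses gives $\sqrt{\textrm{PL}_{q,(k,l),p}}$ of \eqref{PL}, while its phase is exactly $-\zeta_{q,(k,l),p}$ in \eqref{totalpps}.

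Next I would invoke the two simplifications that Assumptions \ref{REFarFieldAssump} and \ref{SizeAssump} have already granted us. By \eqref{CommonPL}, every single-path amplitude collapses to the common constant $\eta_0$, so the amplitude factors out of the sum. By \eqref{REPHASE}, the phase of the RE response function is $\angle g(\boldsymbol{\Psi}_{(k,l),p},\boldsymbol{\Psi}_{q,(k,l)}) = \frac{\pi}{2}+\beta_{k,l}$, independent of $p$ and $q$. Therefore the contribution of the path through $m_{k,l}$ to $h_{q,p}$ can be written as
\begin{equation*}
    \eta_0 \, e^{-j\zeta_{q,(k,l)}} \, e^{j(\frac{\pi}{2}+\beta_{k,l})} \, e^{-j\zeta_{(k,l),p}} = \eta_0 \, h_{q,(k,l)} \, \theta_{k,l} \, h_{(k,l),p},
\end{equation*}
using the definitions of $h_{(k,l),p}$, $h_{q,(k,l)}$, and $\theta_{k,l}$ in \eqref{hklp}, \eqref{hqkl}, and \eqref{thetakl}. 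Summing over $(k,l)\in\mathcal{I}_{Q_{\mathsf{x}}}\times\mathcal{I}_{Q_{\mathsf{y}}}$ yields
\begin{equation*}
    h_{q,p} = \eta_0 \sum_{k,l} h_{q,(k,l)} \, \theta_{k,l} \, h_{(k,l),p}.
\end{equation*}

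The final step is to recognize this double sum as a matrix product. With the row/column indexing of $\boldsymbol{H}_{\mathsf{t}}$, $\boldsymbol{H}_{\mathsf{r}}$ in \eqref{Ht}-\eqref{Hr} and the diagonal ordering of $\boldsymbol{\Theta}$ in \eqref{PSI} all flattening the pair $(k,l)$ via the same map $(k,l)\mapsto (k+\frac{Q_{\mathsf{x}}-1}{2})Q_{\mathsf{y}}+\frac{Q_{\mathsf{y}}+1}{2}+l$, the sum over $(k,l)$ becomes an inner product between the $q$-th row of $\boldsymbol{H}_{\mathsf{r}}$ (weighted by the diagonal of $\boldsymbol{\Theta}$) and the $p$-th column of $\boldsymbol{H}_{\mathsf{t}}$. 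Hence $h_{q,p}$ is the $(q,p)$ entry of $\eta_0 \boldsymbol{H}_{\mathsf{r}}\boldsymbol{\Theta}\boldsymbol{H}_{\mathsf{t}}$, which establishes \eqref{H}.

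Apart from verifying the phase bookkeeping, no step carries real mathematical depth; the proposition is essentially a bookkeeping assembly of results already proved. The only minor obstacle I anticipate is making sure the index conventions for matrices indexed by the pair $(k,l)$ line up consistently with the flattening used in $\boldsymbol{\Theta}$, so that the claimed product reproduces the intended sum over REs. Once that is checked, the Tx and Rx factorizations decouple cleanly because, crucially, $\theta_{k,l}$ depends only on $(k,l)$ and not on $(p,q)$, which is exactly what \eqref{REPHASE} ensures.
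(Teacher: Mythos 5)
Your proposal is correct and follows essentially the same route as the paper's own proof: invoke the common path loss $\eta_0$ from \eqref{CommonPL}, identify the per-path phase factor $h_{q,(k,l)}\,\theta_{k,l}\,h_{(k,l),p}$ via \eqref{totalpps}, \eqref{hklp}, \eqref{hqkl}, and \eqref{thetakl}, and sum over the REs to recover the matrix product \eqref{H}. You merely spell out the superposition over paths and the index flattening more explicitly than the paper does, which is fine.
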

\begin{proof}
    From \eqref{CommonPL}, all the Tx-IRS-Rx links share a common total path loss $\eta_0$. 
    From \eqref{totalpps}, \eqref{hklp}, \eqref{hqkl}, and \eqref{thetakl}, the total path phase shift of the received signal experienced in the $t_p$-$m_{k,l}$-$r_q$ link (in exponential) is $h_{(k,l),p} h_{q,(k,l)} \theta_{k,l}$, with $h_{(k,l),p}$ defined in \eqref{hklp}, $h_{q,(k,l)}$ defined in \eqref{hqkl}, and $\theta_{k,l}$ defined in \eqref{thetakl}, $k \in \mathcal{I}_{Q_{\mathsf{x}}},~ l \in \mathcal{I}_{Q_{\mathsf{y}}},~ p \in \mathcal{I}_{N_{\mathsf{t}}},~ q \in \mathcal{I}_{N_{\mathsf{r}}}$.
    Therefore, $\boldsymbol{H}$ is given by \eqref{H}.
\end{proof}

The cascaded LoS MIMO channel model given in \eqref{H} has a similar form to those proposed in the existing works, such as in \cite{najafi_physics-based_2020,tarable_meta-surface_2020,ozdogan_intelligent_2020,zhang_capacity_2020}. The difference is that, based on the link distances derived under the coordinate system, our model characterize the path loss and the path phase shift precisely. 
From \eqref{hklp} and \eqref{hqkl}, we see that the LoS path phase shifts $\zeta_{(k,l),p}$ and $\zeta_{q,(k,l)}$ contain quadratic terms of antenna indices $k$ and $l$. This means that a spherical wavefront on different REs are taken into account in our model. When $D_{\mathsf{t}}$ and $D_{\mathsf{r}}$ tend to infinity, the quadratic terms of $k$ and $l$ in $\zeta_{(k,l),p}$ and $\zeta_{q,(k,l)}$ diminish, and $\zeta_{(k,l),p}$ and $\zeta_{q,(k,l)}$ reduce to linear functions over $k$ and $l$. This means that the entire IRS see a plane wavefront, and our model in \eqref{H} reduces to the plane wave model adopted in \cite{najafi_physics-based_2020,tarable_meta-surface_2020,ozdogan_intelligent_2020,zhang_capacity_2020,li_joint_2019,cai_hierarchical_nodate}. 

In the literature, designing the phase shifts imposed by the REs in reflection is referred to as \textit{passive beamforming}, which largely affects the overall channel properties. In our model, passive beamforming (PB) corresponds to design the initial phase of each RE phase shift function, i.e., $\beta_{k,l}$ in \eqref{thetakl}, $k \in \mathcal{I}_{Q_{\mathsf{x}}},~ l \in \mathcal{I}_{Q_{\mathsf{y}}}$, or the equivalent $\boldsymbol{\Theta}$ in \eqref{PSI}. In \cite{li_joint_2019}, the authors show that when the Tx and the Rx are in the far-field of the IRS, the PB strategy that compensates the LoS path phase shift differences over Tx-IRS-Rx links, leads to the maximum receive signal-to-noise ratio (SNR).
Under such a PB design, the phase shift imposed by each RE is a linear function of the RE indices, and the IRS achieves anomalous reflection based on the generalized Snell's law \cite{yu_light_2011}. Similar PB strategies that employ linear phase shifts (over the RE indices) are also discussed in \cite{tarable_meta-surface_2020,najafi_physics-based_2020,cai_hierarchical_nodate}. 
But in the channel model considered here, since the curvature of the wavefront on the IRS is taken into account, PB with linear phase shifts is insufficient to compensate the path phase shifts differences over Tx-IRS-Rx links. This inspires us to come up with a new PB strategy, named \textit{reflective focusing}\footnotemark, as detailed in the following subsection.
\footnotetext{\blue{Note that reflective focusing is also referred to as \textit{beamfocusing} in a parallel work \cite{dovelosIntelligentReflectingSurfaces2021a}.}}

\subsection{Channel Representation with Reflective Focusing}
In this subsection, we describe a special channel representation with reflective focusing. 
Reflective focusing aims to make the EM wave radiated by a certain Tx antenna, after being reflected by different REs, coherently superimposed at a certain Rx antenna. This can be achieved by properly designing the phase shifts of RE phase shift functions. 
To be specific, the reflective focusing for the pair of Tx antenna $t_p$ and Rx antenna $r_q$ can be achieved by letting
\begin{equation}\label{REFOCUSING}
    \beta_{k,l} = \frac{d_{(k,l),p} + d_{q,(k,l)}}{\lambda},
\end{equation}
where $d_{(k,l),p}$ and $d_{q,(k,l)}$ are the  $t_p$-$m_{k,l}$ and $m_{k,l}$-$r_q$ link distances given in \eqref{Approxklp} and \eqref{Approxqkl}, respectively, $k \in \mathcal{I}_{Q_{\mathsf{x}}},~ l \in \mathcal{I}_{Q_{\mathsf{y}}}$. With the RE phase shift in \eqref{REFOCUSING}, the EM wave emitted from $t_p$, after being reflected by different REs, have the same phase at $r_q$. 

Without loss of generality, we henceforth always assume that the Tx-Rx antenna pair concerned in reflective focusing is $(t_0,r_0)$, achieved by letting
\begin{equation}\label{REfocusing0}
    \beta_{k,l} = \bar{\beta}_{k,l} \triangleq \frac{d_{(k,l),0} + d_{0,(k,l)}}{\lambda},
\end{equation}
$k \in \mathcal{I}_{Q_{\mathsf{x}}},~ l \in \mathcal{I}_{Q_{\mathsf{y}}}$.
Based on channel model \eqref{H}, with reflective focusing in \eqref{REfocusing0}, the channel between the Tx antenna $t_p$ and the Rx antenna $r_q$ is given by 
\begin{subequations}\label{hqp}
    \begin{align}
        h_{q,p} &\triangleq \eta_0 \sum_{k=-\frac{Q_{\mathsf{x}}-1}{2}}^{\frac{Q_{\mathsf{x}}-1}{2}} \sum_{l=-\frac{Q_{\mathsf{y}}-1}{2}}^{\frac{Q_{\mathsf{y}}-1}{2}}  h_{(k,l),p} \bar{\beta}_{k,l} h_{q,(k,l)} \\
        &= \eta_0 P_{q,p} Q_{q,p}
    \end{align}
\end{subequations}
where
\begin{align}\label{Pqpfinal}
    P_{q,p} &\triangleq \exp \left\{ \! \frac{-j2\pi}{\lambda} \left( \! \frac{\left(d_{\mathsf{t}}\sin \psi_{\mathsf{t}} \cos\gamma_{\mathsf{t}} \right)^2}{2D_{\mathsf{t}}} p^2 \! + \! \frac{\left(d_{\mathsf{r}}\sin \psi_{\mathsf{r}} \cos \gamma_{\mathsf{r}} \right)^2}{2D_{\mathsf{r}}} q^2 \! + \! pd_{\mathsf{t}}\cos\psi_{\mathsf{t}} \! + \! qd_{\mathsf{r}}\cos\psi_{\mathsf{r}} \! \right) \! \right\}
\end{align}
is irrelevant to the indices $k$ and $l$, and
\begin{subequations}\label{Qqpfinal}
    \begin{align}
        Q_{q,p} &\triangleq \sum_{k=-\frac{Q_{\mathsf{x}}-1}{2}}^{\frac{Q_{\mathsf{x}}-1}{2}} \sum_{l=-\frac{Q_{\mathsf{y}}-1}{2}}^{\frac{Q_{\mathsf{y}}-1}{2}} \exp \left\{\frac{j2\pi}{\lambda} \left[ \left( \frac{d_{\mathsf{t}} S_{\mathsf{x}} \sin \psi_{\mathsf{t}} (\cos\gamma_{\mathsf{t}} \sin \omega_{\mathsf{t}}+ \sin\gamma_{\mathsf{t}} \cos \varphi_{\mathsf{t}} \cos\omega_{\mathsf{t}} )}{D_{\mathsf{t}}}p  \right. \right. \right. \notag \\
        & \quad \left. \left. \left. + \frac{d_{\mathsf{r}} S_{\mathsf{x}} \sin \psi_{\mathsf{r}} (\cos\gamma_{\mathsf{r}} \sin \omega_{\mathsf{r}}+ \sin\gamma_{\mathsf{r}} \cos \varphi_{\mathsf{r}} \cos\omega_{\mathsf{r}} )}{D_{\mathsf{r}}}q \right)k \right. \right. \notag \\
        & \quad + \left. \left. \left( \frac{d_{\mathsf{t}}S_{\mathsf{y}}\sin \psi_{\mathsf{t}} (-\cos \gamma_{\mathsf{t}} \cos \omega_{\mathsf{t}} + \sin\gamma_{\mathsf{t}}\cos \varphi_{\mathsf{t}} \sin \omega_{\mathsf{t}})}{D_{\mathsf{t}}}p \right. \right. \right. \notag \\
        & \quad \left. \left. \left. \frac{d_{\mathsf{r}}S_{\mathsf{y}}\sin \psi_{\mathsf{r}} (-\cos \gamma_{\mathsf{r}} \cos \omega_{\mathsf{r}} + \sin\gamma_{\mathsf{r}}\cos \varphi_{\mathsf{r}} \sin \omega_{\mathsf{r}})}{D_{\mathsf{r}}}q \right)l \right] \right\} \label{Qqp3} \\    
        &= \frac{\sin\left(\pi \left( C_{\mathsf{t},\mathsf{x}}p + C_{\mathsf{r},\mathsf{x}}q \right)\right)\sin\left(\pi \left( C_{\mathsf{t},\mathsf{y}}p + C_{\mathsf{r},\mathsf{y}}q \right)\right)}{\sin\left(\frac{\pi}{Q_{\mathsf{x}}} \left( C_{\mathsf{t},\mathsf{x}}p + C_{\mathsf{r},\mathsf{x}}q \right)\right)\sin\left(\frac{\pi}{Q_{\mathsf{y}}} \left( C_{\mathsf{t},\mathsf{y}}p + C_{\mathsf{r},\mathsf{y}}q \right)\right)},\label{Qqpfrac}
    \end{align}
\end{subequations}
$p \in \mathcal{I}_{N_{\mathsf{t}}},~ q \in \mathcal{I}_{N_{\mathsf{r}}}$, with
\begin{subequations}\label{C}
    \begin{align}
        C_{\mathsf{t},\mathsf{x}} &\triangleq \frac{d_{\mathsf{t}}S_{\mathsf{x}}Q_{\mathsf{x}}A_{\mathsf{t},\mathsf{x}} \sin\psi_{\mathsf{t}}\cos(\gamma_{\mathsf{t}}-\bar{\gamma}_{\mathsf{t},\mathsf{x}})}{\lambda D_{\mathsf{t}}},~ C_{\mathsf{t},\mathsf{y}}\triangleq \frac{d_{\mathsf{t}}S_{\mathsf{y}}Q_{\mathsf{y}}A_{\mathsf{t},\mathsf{y}}\sin\psi_{\mathsf{t}}\cos(\gamma_{\mathsf{t}}-\bar{\gamma}_{\mathsf{t},\mathsf{y}})}{\lambda D_{\mathsf{t}}}, \label{Ctx}\\
        C_{\mathsf{r},\mathsf{x}} &\triangleq \frac{d_{\mathsf{r}}S_{\mathsf{x}}Q_{\mathsf{x}}A_{\mathsf{r},\mathsf{x}}\sin\psi_{\mathsf{r}}\cos(\gamma_{\mathsf{r}}-\bar{\gamma}_{\mathsf{r},\mathsf{x}})}{\lambda D_{\mathsf{r}}},~ C_{\mathsf{r},\mathsf{y}} \triangleq \frac{d_{\mathsf{r}}S_{\mathsf{y}}Q_{\mathsf{y}}A_{\mathsf{r},\mathsf{y}}\sin\psi_{\mathsf{r}}\cos(\gamma_{\mathsf{r}}-\bar{\gamma}_{\mathsf{r},\mathsf{y}})}{\lambda D_{\mathsf{r}}}, \label{Crx}
    \end{align}
\end{subequations}
and $A_{\mathsf{t},\mathsf{x}}$, $A_{\mathsf{t},\mathsf{y}}$, $A_{\mathsf{r},\mathsf{x}}$, $A_{\mathsf{r},\mathsf{y}}$, $\bar{\gamma}_{\mathsf{t},\mathsf{x}}$, $\bar{\gamma}_{\mathsf{t},\mathsf{y}}$, $\bar{\gamma}_{\mathsf{r},\mathsf{x}}$, and $\bar{\gamma}_{\mathsf{r},\mathsf{y}}$ defined by
\begin{subequations}
    \begin{align}
        A_{\mathsf{t},\mathsf{x}} &\triangleq \sqrt{\sin^2 \omega_{\mathsf{t}} + \cos^2 \varphi_{\mathsf{t}} \cos^2 \omega_{\mathsf{t}}}, ~\cos \bar{\gamma}_{\mathsf{t},\mathsf{x}}= \frac{\sin \omega_{\mathsf{t}}}{A_{\mathsf{t},\mathsf{x}}}, ~ \sin \bar{\gamma}_{\mathsf{t},\mathsf{x}}= \frac{\cos \varphi_{\mathsf{t}}\cos \omega_{\mathsf{t}}}{A_{\mathsf{t},\mathsf{x}}}, \label{Atx}\\
        A_{\mathsf{t},\mathsf{y}} &\triangleq \sqrt{\cos^2 \omega_{\mathsf{t}} + \cos^2 \varphi_{\mathsf{t}} \sin^2 \omega_{\mathsf{t}}}, ~\cos \bar{\gamma}_{\mathsf{t},\mathsf{y}}= -\frac{\cos \omega_{\mathsf{t}}}{A_{\mathsf{t},\mathsf{y}}}, ~ \sin \bar{\gamma}_{\mathsf{t},\mathsf{y}}= \frac{\cos \varphi_{\mathsf{t}}\sin \omega_{\mathsf{t}}}{A_{\mathsf{t},\mathsf{y}}},\label{Aty}\\
        A_{\mathsf{r},\mathsf{x}} &\triangleq \sqrt{\sin^2 \omega_{\mathsf{r}} + \cos^2 \varphi_{\mathsf{r}} \cos^2 \omega_{\mathsf{r}}}, ~\cos \bar{\gamma}_{\mathsf{r},\mathsf{x}}= \frac{\sin \omega_{\mathsf{r}}}{A_{\mathsf{r},\mathsf{x}}}, ~ \sin \bar{\gamma}_{\mathsf{r},\mathsf{x}}= \frac{\cos \varphi_{\mathsf{r}}\cos \omega_{\mathsf{r}}}{A_{\mathsf{r},\mathsf{x}}},\label{Arx}\\
        A_{\mathsf{r},\mathsf{y}} &\triangleq \sqrt{\cos^2 \omega_{\mathsf{r}} + \cos^2 \varphi_{\mathsf{r}} \sin^2 \omega_{\mathsf{r}}}, ~\cos \bar{\gamma}_{\mathsf{r},\mathsf{y}}= -\frac{\cos \omega_{\mathsf{r}}}{A_{\mathsf{r},\mathsf{y}}}, ~ \sin \bar{\gamma}_{\mathsf{r},\mathsf{y}}= \frac{\cos \varphi_{\mathsf{r}}\sin \omega_{\mathsf{r}}}{A_{\mathsf{r},\mathsf{y}}}.\label{Ary}
    \end{align}
\end{subequations}
The simplification from \eqref{Qqp3} to \eqref{Qqpfrac} employs the geometric sum formula and the trigonometric addition formulas. Denote by $\boldsymbol{P}\in \mathbb{C}^{N_{\mathsf{r}}\times N_{\mathsf{t}}}$ and $\boldsymbol{Q}\in \mathbb{C}^{N_{\mathsf{r}}\times N_{\mathsf{t}}}$ the matrices with $P_{q,p}$ and $Q_{q,p}$ in the $\left(q+\frac{N_{\mathsf{t}}+1}{2}\right)$-th row and the $\left(p+\frac{N_{\mathsf{r}}+1}{2}\right)$-th column, respectively, $p \in \mathcal{I}_{N_{\mathsf{t}}},~ q \in \mathcal{I}_{N_{\mathsf{r}}}$. Then the channel with reflective focusing can be expressed as
\begin{equation}\label{HPQ}
    \boldsymbol{H} = \eta_0 \boldsymbol{P} \odot \boldsymbol{Q},
\end{equation}
where $\odot$ is the Hadamard product operator. Next, we will show that the cascaded LoS MIMO channel with reflective focusing has the potential to support \textit{full multiplexing} communication between the Tx and the Rx.




\section{Full Multiplexing Region of the Cascaded LoS MIMO Channel}\label{SecFMR}
In this section, we study the full multiplexing region (FMR) of the cascaded LoS MIMO channel. We first give the Rayleigh distances of the single-hop LoS channels at the transmit and receive sides, and then derive an inner bound of the FMR of the overall cascaded LoS MIMO channel by using reflective focusing. 

\subsection{Rayleigh Distances of the Tx-IRS and IRS-Rx Channels}
In this subsection, we study the Rayleigh distances of the two single-hop LoS MIMO channels, i.e., the Tx-IRS and the IRS-Rx channels.\footnotemark
\footnotetext{There are many existing works related to the Rayleigh distance of the MIMO channel. For example, Ref. \cite{wang_tens_2014} discusses the Rayleigh distance between ULAs; Ref. \cite{bohagen_optimal_2007} provides the optimal design criteria for antenna spacings to enable full spatial multiplexing between URAs with given distance. In this paper, we assume that the Tx, the Rx, and the IRS are with fixed antenna (RE) spacings, and focus on how the channel properties vary over the distances between them. Thus, the definition of Rayleigh distance is slightly different from the one in \cite{bohagen_optimal_2007} and \cite{wang_tens_2014}.}
Roughly speaking, for a LoS MIMO channel with $M$ antennas at the Tx, $N$ antennas at the Rx, and fixed antenna spacings, Rayleigh distance is the largest Tx-Rx distance that allows the LoS MIMO channel to support $\min\{M,N\}$ simultaneous spatial streams with equal channel quality \cite{wang_tens_2014}.
With the Tx-Rx distance shorter than the Rayleigh distance, the LoS MIMO system is able to harvest the best multiplexing gain brought by exploiting the spatial degree of freedom of the channel. In our model, the IRS can be viewed as a $Q_{\mathsf{x}} \times Q_{\mathsf{y}}$ uniform rectangular array (URA). Thus, the Tx-IRS (or IRS-Rx) channel can be viewed as a MIMO channel with a ULA (or URA) and a URA (or ULA) deployed at the transmitter and the receiver. Studying the Rayleigh distances of the Tx-IRS and IRS-Rx channels will help understand the properties of the overall cascaded LoS MIMO channel.

In our model, the Tx-IRS and IRS-Rx channels are given by matrices $\boldsymbol{H}_{\mathsf{t}} \in \mathbb{C}^{Q_{\mathsf{x}}Q_{\mathsf{y}}\times N_{\mathsf{t}}}$ and $\boldsymbol{H}_{\mathsf{r}} \in \mathbb{C}^{N_{\mathsf{r}}\times Q_{\mathsf{x}}Q_{\mathsf{y}}}$, respectively. Elements of $\boldsymbol{H}_{\mathsf{t}}$ and $\boldsymbol{H}_{\mathsf{r}}$ are normalized complex exponentials with the phases determined by the link distances. We assume that the number of REs on the IRS is no less than the number of antennas in both Tx and Rx, i.e., $Q_{\mathsf{x}}Q_{\mathsf{y}} \geq N_{\mathsf{t}},N_{\mathsf{r}}$. For the Tx-IRS channel to support spatial multiplexing of $N_{\mathsf{t}}$ data streams with equal channel gain, the columns of $\boldsymbol{H}_{\mathsf{t}}$ are required to satisfy
\begin{align}\label{TOrthCrtt}
    \langle \boldsymbol{h}_{\mathsf{t};p_1}, \boldsymbol{h}_{\mathsf{t};p_2} \rangle = 
    \left\{\begin{array}{ll}
        Q_{\mathsf{x}}Q_{\mathsf{y}}, & p_1 = p_2 \\
        0, & p_1 \neq p_2   
    \end{array}\right.,
        ~ p_1 \in \mathcal{I}_{N_{\mathsf{t}}},~ p_2 \in \mathcal{I}_{N_{\mathsf{t}}},
\end{align}
where $\langle \cdot,\cdot \rangle$ denotes the inner-product operator, and $\boldsymbol{h}_{\mathsf{t};p}$ denotes the $\left(p+\frac{N_{\mathsf{t}}+1}{2}\right)$-th column of $\boldsymbol{H}_{\mathsf{t}}$, $p \in \mathcal{I}_{N_{\mathsf{t}}}$.
In \eqref{TOrthCrtt}, the columns of $\boldsymbol{H}_{\mathsf{t}}$ are orthogonal, and the $N_{\mathsf{t}}$ eigenchannels of $\boldsymbol{H}_{\mathsf{t}}$ share a common channel gain of $Q_{\mathsf{x}}Q_{\mathsf{y}}$. The corresponding Tx-IRS channel is said to be able to support \textit{full multiplexing} of $N_{\mathsf{t}}$ spatial streams. Similarly, for the IRS-Rx channel to support full multiplexing of $N_{\mathsf{r}}$ spatial streams, the rows of $\boldsymbol{H}_{\mathsf{r}}$ satisfy the orthogonality and equal-gain requirement as
\begin{align}\label{TOrthCrtr}
    \langle \boldsymbol{h}_{{\mathsf{r}},q_1}, \boldsymbol{h}_{{\mathsf{r}},q_2} \rangle = 
    \left\{\begin{array}{ll}
        Q_{\mathsf{x}}Q_{\mathsf{y}}, & q_1 = q_2 \\
        0, & q_1 \neq q_2   
    \end{array}\right.,
        ~ q_1 \in \mathcal{I}_{N_{\mathsf{r}}},~ q_2 \in \mathcal{I}_{N_{\mathsf{r}}},
\end{align}
where $\boldsymbol{h}_{{\mathsf{r}},q}$ denotes the $\left(q+\frac{N_{\mathsf{r}}+1}{2}\right)$-th row of $\boldsymbol{H}_{\mathsf{r}}$, $q \in \mathcal{I}_{N_{\mathsf{r}}}$.
Consider that the positions of the IRS and the centers of Tx and Rx are fixed. When the Tx-IRS (or IRS-Rx) distance $D_{\mathsf{t}}$ (or $D_{\mathsf{r}}$) is shorter than the Rayleigh distance of the corresponding LoS MIMO channel, the orthogonality and equal-gain requirement on $\boldsymbol{H}_{\mathsf{t}}$ (or $\boldsymbol{H}_{\mathsf{r}}$) can be met by appropriately adjusting the Tx (Rx) orientation. We now present the formal definitions of the Rayleigh distances of the Tx-IRS and IRS-Rx channels.
\begin{defn}
    For given Tx-IRS direction $(\varphi_{\mathsf{t}},\omega_{\mathsf{t}})$, the \textbf{Tx-IRS Rayleigh distance $D_{\mathsf{t}}^{\mathsf{R}}$} is the largest $D_{\mathsf{t}}$ that enables the columns of $\boldsymbol{H}_{\mathsf{t}}$ to satisfy \eqref{TOrthCrtt} only by tuning the Tx orientation angles, i.e., $\psi_{\mathsf{t}}$ and $\gamma_{\mathsf{t}}$. For given IRS-Rx direction $(\varphi_{\mathsf{r}},\omega_{\mathsf{r}})$, the \textbf{IRS-Rx Rayleigh distance $\mathcal{D}_{\mathsf{r}}(\varphi_{\mathsf{r}},\omega_{\mathsf{r}})$} are defined in a similar way as $D_{\mathsf{r}}^{\mathsf{R}}$ by replacing the subscript $\mathsf{t}$ by $\mathsf{r}$ and \eqref{TOrthCrtt} by \eqref{TOrthCrtr}.
\end{defn}

We have the following results on $D_{\mathsf{t}}^{\mathsf{R}}$ and $D_{\mathsf{r}}^{\mathsf{R}}$.
\begin{prop}\label{propSingleSideDray}
    For the Tx-IRS channel, if $Q_{\mathsf{x}} \geq N_{\mathsf{t}}$, $D_{\mathsf{t}}^{\mathsf{R}} \geq D_{\mathsf{t},\mathsf{x}}^{\mathsf{R}} \triangleq \frac{d_{\mathsf{t}} S_{\mathsf{x}}Q_{\mathsf{x}}A_{\mathsf{t},\mathsf{x}}}{\lambda}$ with $A_{\mathsf{t},\mathsf{x}}$ in \eqref{Atx}; if $Q_{\mathsf{y}} \geq N_{\mathsf{t}}$, $D_{\mathsf{t}}^{\mathsf{R}} \geq D_{\mathsf{t},\mathsf{y}} \triangleq \frac{d_{\mathsf{t}} S_{\mathsf{y}}Q_{\mathsf{y}}A_{\mathsf{t},\mathsf{y}}}{\lambda }$ with $A_{\mathsf{t},\mathsf{y}}$ in \eqref{Aty}; if $Q_{\mathsf{x}} \geq N_{\mathsf{t}}$ and $Q_{\mathsf{y}} \geq N_{\mathsf{t}}$, $D_{\mathsf{t}}^{\mathsf{R}}$ is given by
    \begin{equation}\label{DT}
        D_{\mathsf{t}}^{\mathsf{R}} = \max\left\{ D_{\mathsf{t},\mathsf{x}}^{\mathsf{R}}, D_{\mathsf{t},\mathsf{y}}^{\mathsf{R}} \right\}.
    \end{equation}
    When $D_{\mathsf{t}} \leq D_{\mathsf{t}}^{\mathsf{R}}=D_{\mathsf{t},\mathsf{x}}^{\mathsf{R}}$, the orthogonality and equal-gain requirement \eqref{TOrthCrtt} can be satisfied by letting
    \begin{align}\label{txangleCrt}
        |\cos(\gamma_{\mathsf{t}}-\bar{\gamma}_{\mathsf{t},\mathsf{x}})| = 1 ~\textup{and} ~ |\sin\psi_{\mathsf{t}}| = \frac{D_{\mathsf{t}}}{D_{\mathsf{t},\mathsf{x}}^{\mathsf{R}}}, ~ \textup{or}~ |\cos(\gamma_{\mathsf{t}}-\bar{\gamma}_{\mathsf{t},\mathsf{x}})| = \frac{D_{\mathsf{t}}}{D_{\mathsf{t},\mathsf{x}}^{\mathsf{R}}}  ~\textup{and} ~ |\sin\psi_{\mathsf{t}}| = 1.
    \end{align}
    When $D_{\mathsf{t}} \leq D_{\mathsf{t}}^{\mathsf{R}}=D_{\mathsf{t},\mathsf{y}}^{\mathsf{R}}$, the requirement \eqref{TOrthCrtt} can be satisfied by letting
    \begin{align}\label{tyangleCrt}
        |\cos(\gamma_{\mathsf{t}}-\bar{\gamma}_{\mathsf{t},\mathsf{y}})| = 1 ~\textup{and} ~ |\sin\psi_{\mathsf{t}}| = \frac{D_{\mathsf{t}}}{D_{\mathsf{t},\mathsf{y}}^{\mathsf{R}}}, ~ \textup{or}~ |\cos(\gamma_{\mathsf{t}}-\bar{\gamma}_{\mathsf{t},\mathsf{y}})| = \frac{D_{\mathsf{t}}}{D_{\mathsf{t},\mathsf{y}}^{\mathsf{R}}}  ~\textup{and} ~ |\sin\psi_{\mathsf{t}}| = 1.
    \end{align}
    The above results literally hold for the IRS-Rx channel by replacing the subscript $\mathsf{t}$ by $\mathsf{r}$, \eqref{Atx} by \eqref{Arx}, \eqref{Aty} by \eqref{Ary}, and \eqref{TOrthCrtt} by \eqref{TOrthCrtr}.
\end{prop}

The proof of Proposition \ref{propSingleSideDray} can be found in Appendix \ref{proofSingleSideDray}. 
Proposition \ref{propSingleSideDray} gives the Rayleigh distances of the Tx-IRS/IRS-Rx channels and the Tx/Rx orientations to enable full multiplexing between Tx/Rx and the IRS when $D_{\mathsf{t}} \leq D_{\mathsf{t}}^{\mathsf{R}}$ or $D_{\mathsf{r}} \leq D_{\mathsf{r}}^{\mathsf{R}}$. 

From \eqref{H}, the overall cascaded LoS MIMO channel $\boldsymbol{H}$ is jointly determined by the Tx-IRS channel $\boldsymbol{H}_{\mathsf{t}}$, the IRS-Rx channel $\boldsymbol{H}_{\mathsf{r}}$, the passive beamforming matrix $\boldsymbol{\Theta}$, and the path loss $\eta_0$. This means that even if both the Tx-IRS channel and the IRS-Rx channel can support full multiplexing, the full multiplexing capability of the overall cascaded LoS MIMO channel cannot be necessarily guaranteed. As such, it is necessary to further consider the full multiplexing capability of the overall cascaded LoS MIMO channel, as detailed in the next subsection. 

\subsection{Full Multiplexing Region of the Cascaded LoS MIMO Channel}

We first give a formal definition to the notion of \textit{full multiplexing} in the cascaded LoS MIMO channel. Without loss of generality, we henceforth always assume $N_{\mathsf{t}}\geq N_{\mathsf{r}}$. A cascaded LoS MIMO channel (with $N_{\mathsf{t}}\geq N_{\mathsf{r}}$) is qualified to support full multiplexing if the rows of $\boldsymbol{H}$ satisfy
    \begin{equation}\label{NtgeqNr}
        \langle \boldsymbol{h}_{q_1}, \boldsymbol{h}_{q_2} \rangle = \eta_0^2 Q_{\mathsf{x}}^2 Q_{\mathsf{y}}^2,~ \textrm{for}~ q_1 = q_2, ~\textrm{and} ~  \langle \boldsymbol{h}_{q_1}, \boldsymbol{h}_{q_2} \rangle =0,~ \textrm{for}~ q_1 \neq q_2,
    \end{equation}
where $q_1 \in \mathcal{I}_{N_{\mathsf{r}}},~ q_2 \in \mathcal{I}_{N_{\mathsf{r}}}$, and $\boldsymbol{h}_{q}$ denotes the $\left(q+\frac{N_{\mathsf{r}}+1}{2}\right)$-th row of $\boldsymbol{H}$. In \eqref{NtgeqNr}, $\eta_0^2 Q_{\mathsf{x}}^2 Q_{\mathsf{y}}^2$ is the overall channel gain by taking into account the power gain of an eigenchannel of $\boldsymbol{H}_{\mathsf{t}}$ in \eqref{TOrthCrtt}, that of $\boldsymbol{H}_{\mathsf{r}}$ in \eqref{TOrthCrtr}, and the total path loss $\eta_0^2$. 

With the above notion of full multiplexing, we now present the formal definition of the FMR of the cascaded LoS MIMO channel.
\begin{defn}\label{DefnGDRay} 
    For given Tx-IRS direction $(\varphi_{\mathsf{t}},\omega_{\mathsf{t}})$ and IRS-Rx direction $(\varphi_{\mathsf{r}},\omega_{\mathsf{r}})$, the \textbf{full multiplexing region (FMR)} of the cascaded LoS MIMO channel, denoted by $\mathcal{D}_{\mathsf{c}}(\mathcal{L})$ with $\mathcal{L} \triangleq\{\varphi_{\mathsf{t}},\omega_{\mathsf{t}},\varphi_{\mathsf{r}},\omega_{\mathsf{r}}\}$, is the union of all $(D_{\mathsf{t}},D_{\mathsf{r}})$ pairs that enable $\boldsymbol{H}$ to satisfy \eqref{NtgeqNr} by tuning the Tx/Rx orientations $\mathcal{A}\triangleq \left\{\psi_{\mathsf{t}},\gamma_{\mathsf{t}},\psi_{\mathsf{r}},\gamma_{\mathsf{r}}\right\}$ and the PB of the IRS.
\end{defn}

We have the following results on the FMR $\mathcal{D}_{\mathsf{c}}(\mathcal{L})$.

\begin{prop}\label{propGray}
    For a cascaded LoS MIMO channel with $N_{\mathsf{t}}+N_{\mathsf{r}}-2<2Q_{\mathsf{x}}$ and $N_{\mathsf{t}}+N_{\mathsf{r}}-2<2Q_{\mathsf{y}}$, $\mathcal{D}_{\mathsf{c}}(\mathcal{L})$ at least covers the union of two regions $\mathcal{D}_{\mathsf{c},\mathsf{x}}(\mathcal{L})$ and $\mathcal{D}_{\mathsf{c},\mathsf{y}}(\mathcal{L})$. Specifically, $\mathcal{D}_{\mathsf{c},\mathsf{x}}(\mathcal{L})$ is the union of two regions defined by
    \begin{align}
        &\mathcal{D}_{\mathsf{c},\mathsf{x}}^1(\mathcal{L})\triangleq \left\{ (D_{\mathsf{t}},D_{\mathsf{r}})| 0<D_{\mathsf{t}} \leq D_{\mathsf{t},\mathsf{x}}^*,~ 0<D_{\mathsf{r}} \leq D_{\mathsf{r},\mathsf{x}}^{\mathsf{R}} \right\} ~ \textup{and}~ \label{DcxStart} \\
        &\mathcal{D}_{\mathsf{c},\mathsf{x}}^2(\mathcal{L})\triangleq \left\{ (D_{\mathsf{t}},D_{\mathsf{r}})| D_{\mathsf{t},\mathsf{x}}^* < D_{\mathsf{t}} \leq D_{\mathsf{t},\mathsf{x}}^{\mathsf{R}},~ 0<D_{\mathsf{r}} \leq \mathcal{B}_{\mathsf{c},\mathsf{x}}(D_{\mathsf{t}}) \right\}, \label{Dcx2}
    \end{align}
    where 
    \begin{align}
        D_{\mathsf{t},\mathsf{x}}^* &\triangleq D_{\mathsf{t},\mathsf{x}}^{\mathsf{R}}|\cos(\gamma_{\mathsf{t},\mathsf{x}}^*-\bar{\gamma}_{\mathsf{t},\mathsf{x}})|~ \textup{and} \label{GammatxStara} \\
        \gamma_{\mathsf{t},\mathsf{x}}^* &\triangleq \arctan \left(\frac{A_{\mathsf{t},\mathsf{y}}A_{\mathsf{r},\mathsf{x}} \cos \bar{\gamma}_{\mathsf{t},\mathsf{y}}-A_{\mathsf{t},\mathsf{x}} A_{\mathsf{r},\mathsf{y}} \cos(\bar{\gamma}_{\mathsf{r},\mathsf{x}}-\bar{\gamma}_{\mathsf{r},\mathsf{y}}) \cos \bar{\gamma}_{\mathsf{t},\mathsf{x}}}{A_{\mathsf{t},\mathsf{x}}A_{\mathsf{r},\mathsf{y}}\cos(\bar{\gamma}_{\mathsf{r},\mathsf{x}}-\bar{\gamma}_{\mathsf{r},\mathsf{y}}) \sin \bar{\gamma}_{\mathsf{t},\mathsf{x}}-A_{\mathsf{t},\mathsf{y}} A_{\mathsf{r},\mathsf{x}}  \sin \bar{\gamma}_{\mathsf{t},\mathsf{y}}} \right). \label{GammatxStarb} 
    \end{align}
    $\mathcal{B}_{\mathsf{c},\mathsf{x}}(D_{\mathsf{t}})$ is defined by 
    \begin{align}
        &\mathcal{B}_{\mathsf{c},\mathsf{x}}(D_{\mathsf{t}}) \triangleq D_{\mathsf{r},\mathsf{x}}^{\mathsf{R}} \left\{|\cos(\gamma_{\mathsf{x}}(D_{\mathsf{t}})-\bar{\gamma}_{\mathsf{r},\mathsf{x}})|\right\},~ \textrm{where} \label{Bcf2}
    \end{align}
    \begin{multline}
        \gamma_{\mathsf{x}}(D_{\mathsf{t}}) = \\  \! \arctan \left( \! \! \frac{A_{\mathsf{t},\mathsf{x}}A_{\mathsf{r},\mathsf{y}} \cos \bar{\gamma}_{\mathsf{r},\mathsf{y}} \! - \! A_{\mathsf{t},\mathsf{y}} A_{\mathsf{r},\mathsf{x}} \cos \bar{\gamma}_{\mathsf{r},\mathsf{x}} \left( \! \cos(\bar{\gamma}_{\mathsf{t},\mathsf{x}} \! - \! \bar{\gamma}_{\mathsf{t},\mathsf{y}}) \pm \sin(\bar{\gamma}_{\mathsf{t},\mathsf{x}} \! - \! \bar{\gamma}_{\mathsf{t},\mathsf{y}}) \sqrt{\left(\frac{D_{\mathsf{t},\mathsf{x}}^{\mathsf{R}}}{D_{\mathsf{t}}}\right)^2 \! - \! 1} \right)}{A_{\mathsf{t},\mathsf{y}} A_{\mathsf{r},\mathsf{x}} \sin \bar{\gamma}_{\mathsf{r},\mathsf{x}} \left( \! \cos(\bar{\gamma}_{\mathsf{t},\mathsf{x}} \! - \! \bar{\gamma}_{\mathsf{t},\mathsf{y}}) \pm \sin(\bar{\gamma}_{\mathsf{t},\mathsf{x}} \! - \! \bar{\gamma}_{\mathsf{t},\mathsf{y}}) \sqrt{\left(\frac{D_{\mathsf{t},\mathsf{x}}^{\mathsf{R}}}{D_{\mathsf{t}}}\right)^2 \! - \! 1} \right) \! - \! A_{\mathsf{t},\mathsf{x}}A_{\mathsf{r},\mathsf{y}} \sin \bar{\gamma}_{\mathsf{r},\mathsf{y}}} \! \! \right) \label{gammaxPstva}
    \end{multline}
    with ``$\pm$'' selected to be ``$+$'' or ``$-$'' that leads to a larger $\mathcal{B}_{\mathsf{c},\mathsf{x}}(D_{\mathsf{t}})$ in \eqref{Bcf2}.
    When $(D_{\mathsf{t}},D_{\mathsf{r}}) \in \mathcal{D}_{\mathsf{c},\mathsf{x}}^1(\mathcal{L})$, \eqref{NtgeqNr} can be met by letting
    \begin{align}
        &\tan\gamma_{\mathsf{t}} = \tan\gamma_{\mathsf{t},\mathsf{x}}^* = \frac{A_{\mathsf{t},\mathsf{y}}A_{\mathsf{r},\mathsf{x}} \cos \bar{\gamma}_{\mathsf{t},\mathsf{y}}-A_{\mathsf{t},\mathsf{x}} A_{\mathsf{r},\mathsf{y}} \cos(\bar{\gamma}_{\mathsf{r},\mathsf{x}}-\bar{\gamma}_{\mathsf{r},\mathsf{y}}) \cos \bar{\gamma}_{\mathsf{t},\mathsf{x}}}{A_{\mathsf{t},\mathsf{x}}A_{\mathsf{r},\mathsf{y}}\cos(\bar{\gamma}_{\mathsf{r},\mathsf{x}}-\bar{\gamma}_{\mathsf{r},\mathsf{y}}) \sin \bar{\gamma}_{\mathsf{t},\mathsf{x}}-A_{\mathsf{t},\mathsf{y}} A_{\mathsf{r},\mathsf{x}}  \sin \bar{\gamma}_{\mathsf{t},\mathsf{y}}}, \label{Dcx1Orienta} \\
        & \tan\gamma_{\mathsf{r}}=\tan\bar{\gamma}_{\mathsf{r},\mathsf{x}} = \frac{\cos \varphi_{\mathsf{r}}}{\tan \omega_{\mathsf{r}}} \label{Dcx1Orienta2}  \\
        &\sin\psi_{\mathsf{t}} = \frac{D_{\mathsf{t}}}{D_{\mathsf{t},\mathsf{x}}^*}, ~\textup{and}~ \sin\psi_{\mathsf{r}} = \frac{D_{\mathsf{r}}}{D_{\mathsf{r},\mathsf{x}}^{\mathsf{R}}}. \label{Dcx1Orientb}
    \end{align}
    In $\mathcal{D}_{\mathsf{c},\mathsf{x}}^2(\mathcal{L})$, \eqref{NtgeqNr} can be met by letting 
    \begin{align}
        &\tan(\gamma_{\mathsf{t}}-\bar{\gamma}_{\mathsf{t},\mathsf{x}}) = \sqrt{\left(\frac{D_{\mathsf{t},\mathsf{x}}^{\mathsf{R}}}{D_{\mathsf{t}}}\right)^2-1} \label{Dcx2Orienta} \\
        &\tan\gamma_{\mathsf{r}} = \tan(\gamma_{\mathsf{x}}(D_{\mathsf{t}})), \label{Dcx2Orientb} \\
        &\psi_{\mathsf{t}} = \frac{\pi}{2}, ~ \textup{and}~ \sin\psi_{\mathsf{r}} = \frac{D_{\mathsf{r}}}{D_{\mathsf{r},\mathsf{x}}^{\mathsf{R}}|\cos(\gamma_{\mathsf{x}}(D_{\mathsf{t}})-\bar{\gamma}_{\mathsf{r},\mathsf{x}})|}. \label{Dcx2Orientd}
    \end{align}
    $\mathcal{D}_{\mathsf{c},\mathsf{y}}(\mathcal{L})$, $\mathcal{D}_{\mathsf{c},\mathsf{y}}^1(\mathcal{L})$, and $\mathcal{D}_{\mathsf{c},\mathsf{y}}^2(\mathcal{L})$ are defined by swapping the subscripts $\mathsf{x}$ and $\mathsf{y}$ in the definitions of $\mathcal{D}_{\mathsf{c},\mathsf{x}}(\mathcal{L})$, $\mathcal{D}_{\mathsf{c},\mathsf{x}}^1(\mathcal{L})$, and $\mathcal{D}_{\mathsf{c},\mathsf{x}}^2(\mathcal{L})$, respectively. In $\mathcal{D}_{\mathsf{c},\mathsf{y}}^1(\mathcal{L})$ and $\mathcal{D}_{\mathsf{c},\mathsf{y}}^2(\mathcal{L})$, \eqref{NtgeqNr} can be met by swapping the subscripts $\mathsf{x}$ and $\mathsf{y}$ in \eqref{Dcx1Orienta}-\eqref{Dcx1Orientb} and \eqref{Dcx2Orienta}-\eqref{Dcx2Orientd}, respectively.
\end{prop}

\begin{figure}[t]
    \centering
    \includegraphics[width = .5\linewidth]{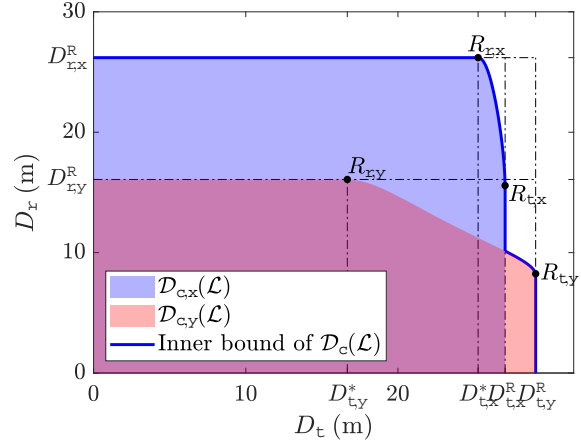}
    \caption{Predicted inner bound of the FMR of a general cascaded LoS MIMO channel. $N_{\mathsf{t}}=N_{\mathsf{r}}=5$, $Q_{\mathsf{x}} = Q_{\mathsf{y}} =15$, $d_{\mathsf{t}}=S_{\mathsf{x}}=S_{\mathsf{y}}=0.1$m, $\lambda=0.005$m, $\omega_{\mathsf{t}}=\frac{7\pi}{6}$, $\varphi_{\mathsf{t}}=\frac{\pi}{6}$, $\omega_{\mathsf{r}}=\frac{\pi}{3}$, and $\varphi_{\mathsf{r}}=\frac{3\pi}{7}$.}
    \label{FMRGeneral}
\end{figure}

The proof of Proposition \ref{propGray} can be found in Appendix \ref{proofGray}. Proposition \ref{propGray} generally gives an inner bound of the FMR based on the PB of reflective focusing. Fig. \ref{FMRGeneral} illustrates such a inner bound of the FMR of a general cascaded LoS MIMO channel with $N_{\mathsf{t}}=N_{\mathsf{r}}=5$, $Q_{\mathsf{x}} = Q_{\mathsf{y}} =15$, $d_{\mathsf{t}}=S_{\mathsf{x}}=S_{\mathsf{y}}=0.1$m, $\lambda=0.005$m, $\omega_{\mathsf{t}}=\frac{7\pi}{6}$, $\varphi_{\mathsf{t}}=\frac{\pi}{6}$, $\omega_{\mathsf{r}}=\frac{\pi}{3}$, and $\varphi_{\mathsf{r}}=\frac{3\pi}{7}$. $\mathcal{D}_{\mathsf{c},\mathsf{x}}(\mathcal{L})$ and $\mathcal{D}_{\mathsf{c},\mathsf{y}}(\mathcal{L})$ are colored in light blue and pink, respectively. The inner bound of the FMR predicted by Proposition \ref{propGray}, i.e., the boundary of $\mathcal{D}_{\mathsf{c},\mathsf{x}}(\mathcal{L}) \cup \mathcal{D}_{\mathsf{c},\mathsf{y}}(\mathcal{L})$ is marked by blue solid lines. $D_{\mathsf{t}}=D_{\mathsf{t},\mathsf{x}}^{\mathsf{R}}$, $D_{\mathsf{t}}=D_{\mathsf{t},\mathsf{y}}^{\mathsf{R}}$, $D_{\mathsf{t}}=D_{\mathsf{t},\mathsf{x}}^*$, $D_{\mathsf{t}}=D_{\mathsf{t},\mathsf{y}}^*$, $D_{\mathsf{r}}=D_{\mathsf{r},\mathsf{x}}^{\mathsf{R}}$, and $D_{\mathsf{r}}=D_{\mathsf{r},\mathsf{y}}^{\mathsf{R}}$ are marked by black dot-dash lines. We show that in such a general case, the region inside the predicted inner bound of the FMR is smaller than the rectangular area defined by the two single-hop Rayleigh distances, i.e., $\left\{(D_{\mathsf{t}},D_{\mathsf{r}})|0<D_{\mathsf{t}}\leq D_{\mathsf{t}}^{\mathsf{R}}, 0<D_{\mathsf{r}} \leq D_{{\mathsf{r}}}^{\mathsf{R}} \right\}$.

We have the following remark on Proposition \ref{propGray}.
\begin{remark}
    Denote by $R_{\mathsf{r},\mathsf{x}}$ the intersection of the curve $D_{\mathsf{r}}=\mathcal{B}_{\mathsf{c},\mathsf{x}}(D_{\mathsf{t}})$ and the vertical line $D_{\mathsf{t}} = D_{\mathsf{t},\mathsf{x}}^*$, and by $R_{\mathsf{t},\mathsf{x}}$ the intersection of the curve $D_{\mathsf{r}}=\mathcal{B}_{\mathsf{c},\mathsf{x}}(D_{\mathsf{t}})$ and the vertical line $D_{\mathsf{t}} = D_{\mathsf{t},\mathsf{x}}^{\mathsf{R}}$. Then, by plugging $D_{\mathsf{t}} = D_{\mathsf{t},\mathsf{x}}^*$ and $D_{\mathsf{t}} = D_{\mathsf{t},\mathsf{x}}^{\mathsf{R}}$ into \eqref{Bcf2}, we have $R_{\mathsf{r},\mathsf{x}}=(D_{\mathsf{t},\mathsf{x}}^*,D_{\mathsf{r},\mathsf{x}}^{\mathsf{R}})$, and $R_{\mathsf{t},\mathsf{x}}=(D_{\mathsf{t},\mathsf{x}}^{\mathsf{R}},D_{\mathsf{r},\mathsf{x}}^*)$ with $D_{\mathsf{r},\mathsf{x}}^*$ given by 
    \begin{align}
        D_{\mathsf{r},\mathsf{x}}^* &\triangleq D_{\mathsf{r},\mathsf{x}}^{\mathsf{R}}|\cos(\gamma_{\mathsf{r},\mathsf{x}}^*-\bar{\gamma}_{\mathsf{r},\mathsf{x}})|~ \textup{with}\\
        \gamma_{\mathsf{r},\mathsf{x}}^* &\triangleq \arctan \left(\frac{A_{\mathsf{r},\mathsf{y}}A_{\mathsf{t},\mathsf{x}} \cos \bar{\gamma}_{\mathsf{r},\mathsf{y}}-A_{\mathsf{r},\mathsf{x}} A_{\mathsf{t},\mathsf{y}} \cos(\bar{\gamma}_{\mathsf{t},\mathsf{x}}-\bar{\gamma}_{\mathsf{t},\mathsf{y}}) \cos \bar{\gamma}_{\mathsf{r},\mathsf{x}}}{A_{\mathsf{r},\mathsf{x}}A_{\mathsf{t},\mathsf{y}}\cos(\bar{\gamma}_{\mathsf{t},\mathsf{x}}-\bar{\gamma}_{\mathsf{t},\mathsf{y}}) \sin \bar{\gamma}_{\mathsf{r},\mathsf{x}}-A_{\mathsf{r},\mathsf{y}} A_{\mathsf{t},\mathsf{x}}  \sin \bar{\gamma}_{\mathsf{r},\mathsf{y}}} \right).
    \end{align}
    Similarly, $R_{\mathsf{r},\mathsf{y}}$, $R_{\mathsf{t},\mathsf{y}}$, and $D_{\mathsf{r},\mathsf{y}}^*$ can be defined by swapping the subscripts $\mathsf{x}$ and $\mathsf{y}$ in the definitions of $R_{\mathsf{r},\mathsf{x}}$, $R_{\mathsf{t},\mathsf{x}}$, and $D_{\mathsf{r},\mathsf{x}}^*$, respectively. $R_{\mathsf{r},\mathsf{x}}$, $R_{\mathsf{t},\mathsf{x}}$, $R_{\mathsf{r},\mathsf{y}}$, and $R_{\mathsf{t},\mathsf{y}}$ are marked in Fig. \ref{FMRGeneral}. With the IRS employing the reflective focusing PB strategy, $R_{\mathsf{r},\mathsf{x}}$ characterizes the largest $D_{\mathsf{t}}$ for ensuring full multiplexing when $D_{\mathsf{r}}=D_{\mathsf{r},\mathsf{x}}^{\mathsf{R}}$; $R_{\mathsf{t},\mathsf{x}}$ characterizes the largest $D_{\mathsf{r}}$ for ensuring full multiplexing when $D_{\mathsf{t}}=D_{\mathsf{t},\mathsf{x}}^{\mathsf{R}}$; $R_{\mathsf{r},\mathsf{y}}$ characterizes the largest $D_{\mathsf{t}}$ for ensuring full multiplexing when $D_{\mathsf{r}}=D_{\mathsf{r},\mathsf{y}}^{\mathsf{R}}$; $R_{\mathsf{t},\mathsf{y}}$ characterizes the largest $D_{\mathsf{r}}$ for ensuring full multiplexing when $D_{\mathsf{t}}=D_{\mathsf{t},\mathsf{y}}^{\mathsf{R}}$.
\end{remark}

We have the following corollary on Proposition \ref{propGray}.

\begin{figure}[t]
    \centering
    \begin{subfigure}{.46\linewidth}
        \centering
        \includegraphics[height=4.8cm]{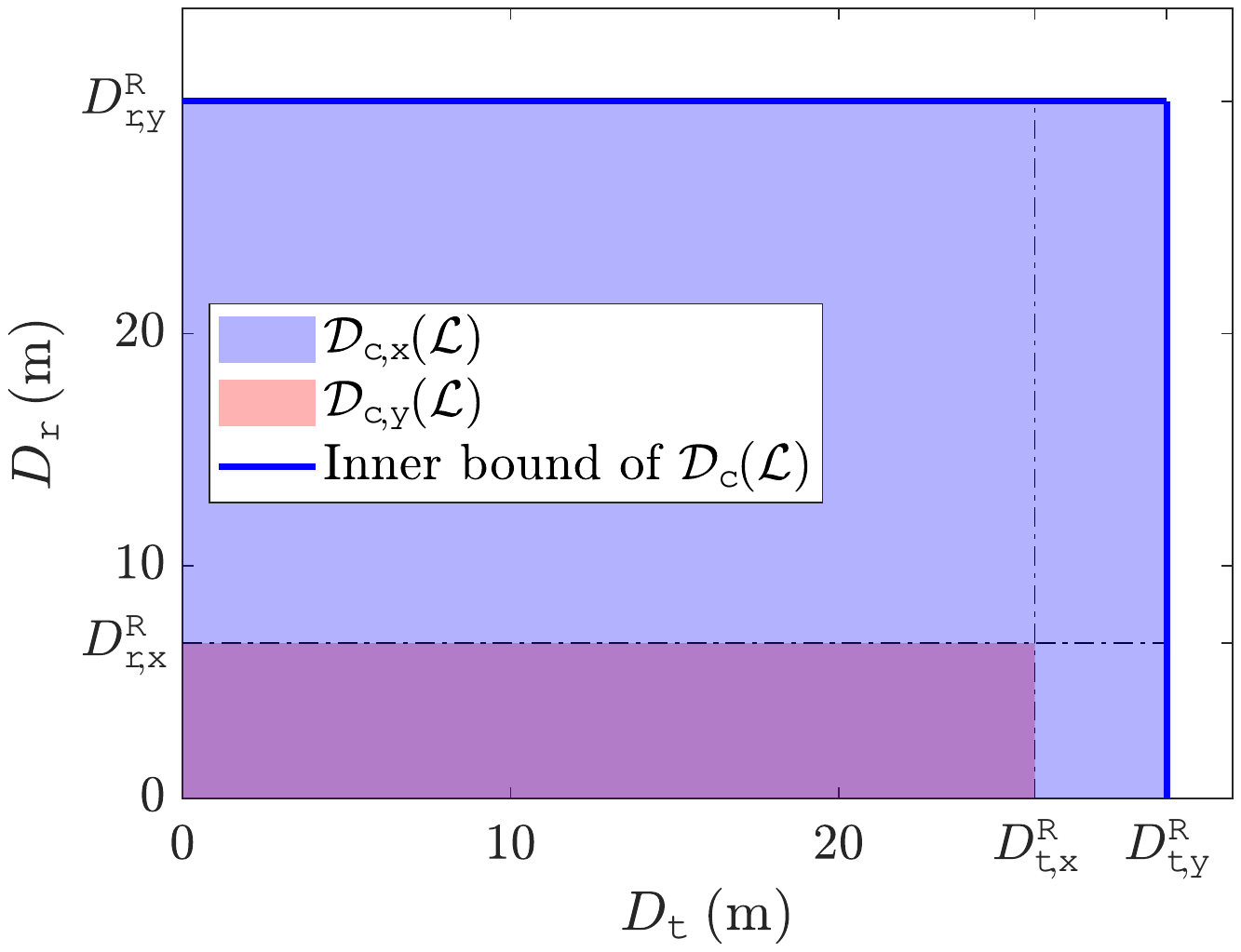}
        \caption{}
        \label{Figxza}
    \end{subfigure}
    \begin{subfigure}{.46\linewidth}
        \centering
        \includegraphics[height=4.8cm]{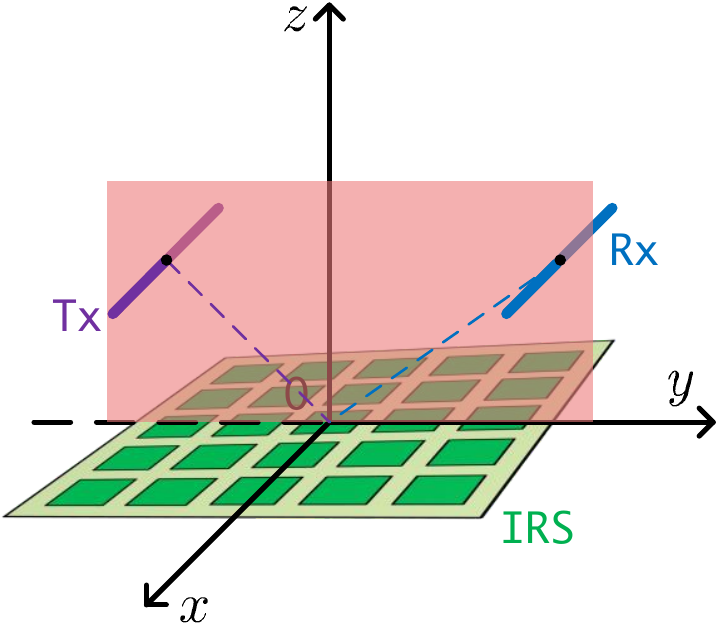}
        \caption{}
        \label{Figxzb}
    \end{subfigure}
    \caption{(a) Predicted inner bound of the FMR when both the centers of the Tx and the Rx are located on the $y$-$z$ plane. (b) Tx/Rx orientations to realize full multiplexing with $(D_{\mathsf{t}},D_{\mathsf{r}})=(D_{\mathsf{t},\mathsf{y}}^{\mathsf{R}},D_{\mathsf{r},\mathsf{y}}^{\mathsf{R}})$.}
    \label{Figxz}
\end{figure}

\begin{coro}\label{CORO1}
    When both the centers of the Tx and the Rx are located on the $y$-$z$ or $x$-$z$ planes, i.e., $\omega_{\mathsf{t}}\in \left\{0,\frac{\pi}{2},\pi,\frac{3\pi}{2}\right\}$ and $\omega_{\mathsf{r}} \in \left\{0,\frac{\pi}{2},\pi,\frac{3\pi}{2}\right\}$, we have
    \begin{equation}\label{DtxStarEqDtxR}
        D_{\mathsf{t},\mathsf{x}}^* = D_{\mathsf{t},\mathsf{x}}^{\mathsf{R}} ~ \textup{and} ~ D_{\mathsf{t},\mathsf{y}}^* = D_{\mathsf{t},\mathsf{y}}^{\mathsf{R}},
    \end{equation}
\end{coro}
\begin{proof}
    First consider $\omega_{\mathsf{t}}=0$ and $\omega_{\mathsf{r}}=\frac{\pi}{2}$. By plugging $\omega_{\mathsf{t}}=0$ and $\omega_{\mathsf{r}}=\frac{\pi}{2}$ into \eqref{GammatxStarb}, we have $\tan \gamma_{\mathsf{t},\mathsf{x}}^* = \infty$ and thus $\gamma_{\mathsf{t},\mathsf{x}}^* = \frac{\pi}{2}$ or $\frac{3\pi}{2}$. By plugging $\omega_{\mathsf{t}}=0$ into \eqref{Atx}, we have $\bar{\gamma}_{\mathsf{t},\mathsf{x}}=\frac{\pi}{2}$ or $\frac{3\pi}{2}$. Therefore, $|\cos(\gamma_{\mathsf{t},\mathsf{x}}^*-\bar{\gamma}_{\mathsf{t},\mathsf{x}})|=1$, and thus $D_{\mathsf{t},\mathsf{x}}^*=D_{\mathsf{t},\mathsf{x}}^{\mathsf{R}}$. Similarly, with $\omega_{\mathsf{t}}=0$ and $\omega_{\mathsf{r}}=\frac{\pi}{2}$, we have $\gamma_{\mathsf{t},\mathsf{y}}^* = 0$ or $\pi$, and $\bar{\gamma}_{\mathsf{t},\mathsf{y}}=0$ or $\pi$. Therefore, $|\cos(\gamma_{\mathsf{t},\mathsf{y}}^*-\bar{\gamma}_{\mathsf{t},\mathsf{y}})|=1$, and thus $D_{\mathsf{t},\mathsf{y}}^*=D_{\mathsf{t},\mathsf{y}}^{\mathsf{R}}$. For other $\omega_{\mathsf{t}}\in \left\{0,\frac{\pi}{2},\pi,\frac{3\pi}{2}\right\}$ and $\omega_{\mathsf{r}} \in \left\{0,\frac{\pi}{2},\pi,\frac{3\pi}{2}\right\}$, \eqref{DtxStarEqDtxR} can be proved in a similar way. 
\end{proof}
Corollary \ref{CORO1} means that when both the centers of the Tx and the Rx are located on the $y$-$z$ or $x$-$z$ planes, $R_{\mathsf{r},\mathsf{x}}$ and $R_{\mathsf{t},\mathsf{x}}$ merge into a single point $(D_{\mathsf{t}},D_{\mathsf{r}})=(D_{\mathsf{t},\mathsf{x}}^{\mathsf{R}},D_{\mathsf{r},\mathsf{x}}^{\mathsf{R}})$, and $R_{\mathsf{r},\mathsf{y}}$ and $R_{\mathsf{t},\mathsf{y}}$ merge into a single point $(D_{\mathsf{t}},D_{\mathsf{r}})=(D_{\mathsf{t},\mathsf{y}}^{\mathsf{R}},D_{\mathsf{r},\mathsf{y}}^{\mathsf{R}})$. Then, $\mathcal{D}_{\mathsf{c},\mathsf{x}}(\mathcal{L})$ and $\mathcal{D}_{\mathsf{c},\mathsf{y}}(\mathcal{L})$ are two rectangular areas on the $D_{\mathsf{t}}$-$D_{\mathsf{r}}$ plane. 
\blue{
Two special cases are illustrated in Fig. \ref{Figxz} and Fig. \ref{Figxyz} with $N_{\mathsf{t}}$, $N_{\mathsf{r}}$, $Q_{\mathsf{x}}$, $Q_{\mathsf{y}}$, $d_{\mathsf{t}}$, $S_{\mathsf{x}}$, $S_{\mathsf{y}}$, and $\lambda$ same as in Fig. \ref{FMRGeneral}. Fig. \ref{Figxza} illustrates the predicted inner bound of the FMR with $\omega_{\mathsf{t}}=\frac{3\pi}{2}$, $\omega_{\mathsf{r}}=\frac{\pi}{2}$, i.e., the Tx and the Rx are centered on the $y$-$z$ plane. From Corollary \ref{CORO1}, $\mathcal{D}_{\mathsf{c},\mathsf{x}}(\mathcal{L})$ and $\mathcal{D}_{\mathsf{c},\mathsf{y}}(\mathcal{L})$ are two rectangular areas on the $D_{\mathsf{t}}$-$D_{\mathsf{r}}$ plane, and $\mathcal{D}_{\mathsf{c},\mathsf{y}}(\mathcal{L})$ is contained in $\mathcal{D}_{\mathsf{c},\mathsf{x}}(\mathcal{L})$. Furthermore, $\mathcal{D}_{\mathsf{c},\mathsf{x}}(\mathcal{L})$ coincides with the rectangular area defined by the Rayleigh distances of the Tx-IRS and the IRS-Rx channels, i.e, $\left\{(D_{\mathsf{t}},D_{\mathsf{r}})|0<D_{\mathsf{t}}\leq D_{\mathsf{t}}^{\mathsf{R}}=D_{\mathsf{t},\mathsf{y}}^{\mathsf{R}}, 0<D_{\mathsf{r}} \leq D_{{\mathsf{r}}}^{\mathsf{R}}=D_{\mathsf{r},\mathsf{y}}^{\mathsf{R}} \right\}$. At the top-right corner of $\mathcal{D}_{\mathsf{c},\mathsf{x}}(\mathcal{L})$, by plugging $\omega_{\mathsf{t}}=\frac{3\pi}{2}$, $\omega_{\mathsf{r}}=\frac{\pi}{2}$, $D_{\mathsf{t}}=D_{\mathsf{t},\mathsf{x}}^{\mathsf{R}}$, and $D_{\mathsf{r}}=D_{\mathsf{r},\mathsf{x}}^{\mathsf{R}}$ into \eqref{Dcx1Orienta}-\eqref{Dcx1Orientb}, we have $\gamma_{\mathsf{t}}=0$ or $\pi$, $\gamma_{\mathsf{r}}=0$ or $\pi$, $\psi_{\mathsf{t}}=\psi_{\mathsf{r}}=\frac{\pi}{2}$. This means that the Tx and the Rx are parallel to the $x$-axis, as illustrated in Fig. \ref{Figxzb}. 
Fig. \ref{Figxyza} illustrates the predicted inner bound of the FMR with $\omega_{\mathsf{t}}=0$, $\omega_{\mathsf{r}}=\frac{\pi}{2}$, i.e., the Tx and the Rx are centered on the $x$-$z$ and $y$-$z$ planes, respectively. We note that $\mathcal{D}_{\mathsf{c},\mathsf{x}}(\mathcal{L})$ and $\mathcal{D}_{\mathsf{c},\mathsf{y}}(\mathcal{L})$ are overlapped rectangular areas, and the union of $\mathcal{D}_{\mathsf{c},\mathsf{x}}(\mathcal{L})$ and $\mathcal{D}_{\mathsf{c},\mathsf{y}}(\mathcal{L})$ is smaller than the rectangular area defined by the Rayleigh distances of the Tx-IRS and the IRS-Rx channels, i.e, $\left\{(D_{\mathsf{t}},D_{\mathsf{r}})|0<D_{\mathsf{t}}\leq D_{\mathsf{t}}^{\mathsf{R}}=D_{\mathsf{t},\mathsf{y}}^{\mathsf{R}}, 0<D_{\mathsf{r}} \leq D_{{\mathsf{r}}}^{\mathsf{R}}=D_{\mathsf{r},\mathsf{x}}^{\mathsf{R}} \right\}$. At the top-right corner of $\mathcal{D}_{\mathsf{c},\mathsf{x}}(\mathcal{L})$, by plugging $\omega_{\mathsf{t}}=0$, $\omega_{\mathsf{r}}=\frac{\pi}{2}$, $D_{\mathsf{t}}=D_{\mathsf{t},\mathsf{x}}^{\mathsf{R}}$, and $D_{\mathsf{r}}=D_{\mathsf{r},\mathsf{x}}^{\mathsf{R}}$ into \eqref{Dcx1Orienta}-\eqref{Dcx1Orientb}, we have $\gamma_{\mathsf{t}}=0$ or $\pi$, $\gamma_{\mathsf{r}}=\frac{\pi}{2}$ or $\frac{3\pi}{2}$, $\psi_{\mathsf{t}}=\psi_{\mathsf{r}}=\frac{\pi}{2}$. This means that the Tx lie on the $x$-$z$ plane and is perpendicular to the line that passes the origin and the center of the Tx, and the Rx is parallel to the $x$-axis, as illustrated in Fig. \ref{Figxyzb}.
}

\begin{figure}[t]
    \centering
    \begin{subfigure}{.46\linewidth}
        \centering
        \includegraphics[height=4.8cm]{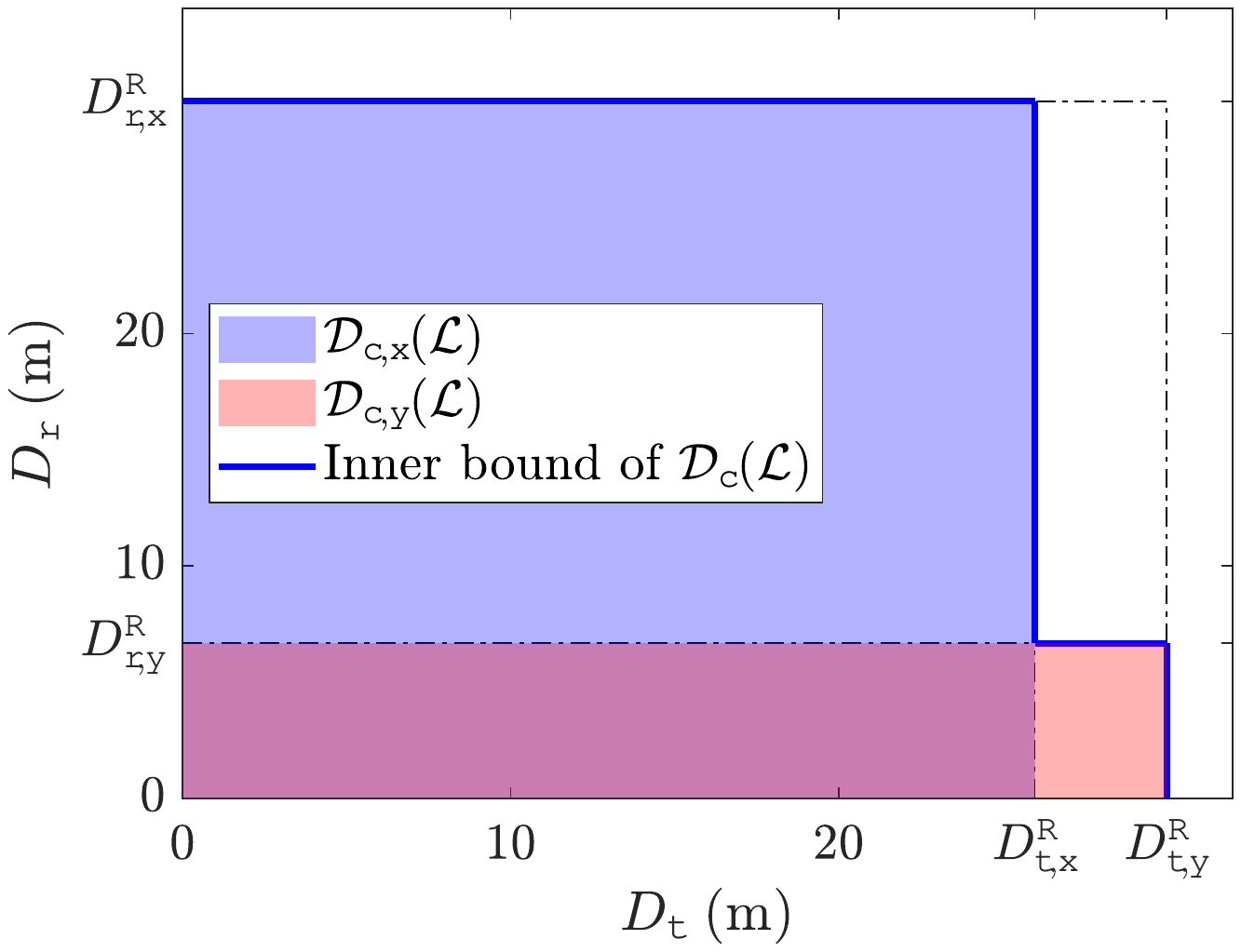}
        \caption{}
        \label{Figxyza}
    \end{subfigure}
    \begin{subfigure}{.46\linewidth}
        \centering
        \includegraphics[height=4.8cm]{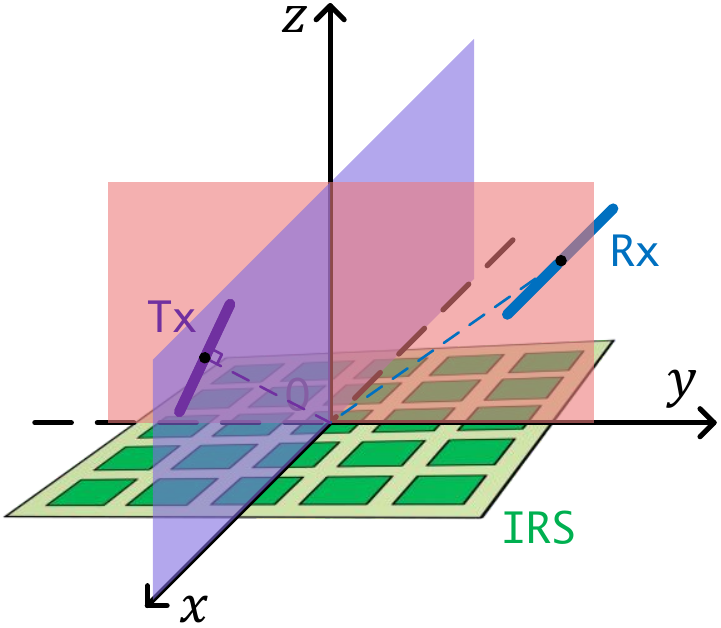}
        \caption{}
        \label{Figxyzb}
    \end{subfigure}
    \caption{(a) Predicted inner bound of the FMR when the Tx is centered on the $x$-$z$ plane and the Rx is centered on the $y$-$z$ plane. (b) Tx/Rx orientations to realize full multiplexing with $(D_{\mathsf{t}},D_{\mathsf{r}})=(D_{\mathsf{t},\mathsf{x}}^{\mathsf{R}},D_{\mathsf{r},\mathsf{x}}^{\mathsf{R}})$.}
    \label{Figxyz}
\end{figure}

\section{Conclusion}\label{SecConclusion}
In this paper, we developed an IRS-aided MIMO channel model with cascaded LoS communication links, and analyzed the capacity of the proposed cascaded LoS MIMO system. 
When modeling the reflection of the incident wave at the IRS, the curvature of the wavefront on different REs is taken into account. Based on the proposed channel model, we studied the spatial multiplexing capability of the cascaded LoS MIMO channel. To measure the spatial multiplexing capability of the cascaded LoS MIMO channel with the Tx-IRS and IRS-Rx distances, we introduced the notion of FMR for the cascaded LoS MIMO channel. Based on a special PB strategy named reflective focusing, we derived an inner bound of the FMR, and provided the orientation settings of the antenna arrays to enable full multiplexing (between Tx and Rx) in the region defined by such an inner bound. Part II of this work considers the mutual information maximization of the cascaded LoS MIMO systems over the PB and the array orientations.

\appendices

\section{Proof of Proposition \ref{propSingleSideDray}}\label{proofSingleSideDray}
It suffices to prove the Rayleigh distance on the Tx-IRS side. The Rayleigh distance on the IRS-Rx side can be proved in a similar manner. By recalling $h_{(k,l),p} = e^{-j\zeta_{(k,l),p}}$ and $\zeta_{(k,l),p} = \frac{2\pi d_{(k,l),p}}{\lambda}$, $\langle \boldsymbol{h}_{\mathsf{t};p_1}, \boldsymbol{h}_{\mathsf{t};p_2} \rangle$ in \eqref{TOrthCrtt} can be rewritten as
    \begin{align}
        \langle \boldsymbol{h}_{\mathsf{t};p_1}, \boldsymbol{h}_{\mathsf{t};p_2} \rangle &= \sum_{k = -\frac{Q_{\mathsf{x}}-1}{2}}^{\frac{Q_{\mathsf{x}}-1}{2}} \sum_{l = -\frac{Q_{\mathsf{y}}-1}{2}}^{\frac{Q_{\mathsf{y}}-1}{2}} e^{\frac{j2\pi}{\lambda}\left( d_{(k,l),p_2} - d_{(k,l),p_1} \right)}, \label{IPhtp1htp2}
    \end{align}
$p_1 \in \mathcal{I}_{N_{\mathsf{t}}},~ p_2 \in \mathcal{I}_{N_{\mathsf{t}}}$.
By plugging \eqref{Approxklp} and \eqref{Approxqkl} in \eqref{IPhtp1htp2}, and defining
\begin{align}
    C_{p_1,p_2} &\triangleq \exp\left\{ \frac{j2\pi}{\lambda}\left( (p_2-p_1)d_t\cos\psi_{\mathsf{t}}+ \frac{(p_2^2-p_1^2)d_t^2\sin^2 \psi_{\mathsf{t}}}{2D_t} \right)\right\}
\end{align}
we have
    \begin{align}\label{ip}
        \langle \boldsymbol{h}_{p_1},\boldsymbol{h}_{p_2} \rangle &= C_{p_1,p_2} A_{p_1,p_2}^{\mathsf{x}} A_{p_1,p_2}^{\mathsf{y}}, ~ p_1 \in \mathcal{I}_{N_{\mathsf{t}}},~ p_2 \in \mathcal{I}_{N_{\mathsf{t}}},
    \end{align}
where $A_{p_1,p_2}^{\mathsf{x}} \triangleq \frac{\sin(\pi (p_1-p_2) C_{\mathsf{t},\mathsf{x}})}{\sin (\frac{\pi}{Q_{\mathsf{x}}} (p_1-p_2) C_{\mathsf{t},\mathsf{x}})}$ and $A_{p_1,p_2}^{\mathsf{y}} \triangleq \frac{\sin(\pi (p_1-p_2) C_{\mathsf{t},\mathsf{y}})}{ \sin (\frac{\pi}{Q_{\mathsf{y}}} (p_1-p_2) C_{\mathsf{t},\mathsf{y}})}$, with $C_{\mathsf{t},\mathsf{x}}$ and $C_{\mathsf{r},\mathsf{y}}$ defined in \eqref{Ctx}. 

From \eqref{ip}, for $\boldsymbol{H}_{\mathsf{t}}$ to meet the requirements in \eqref{TOrthCrtt}, a necessary and sufficient condition is 
\begin{subequations}\label{AxAycond}
    \begin{align}
        A_{p_1,p_2}^{\mathsf{x}} &= Q_{\mathsf{x}} ~ \textup{and}~ A_{p_1,p_2}^{\mathsf{y}}=Q_{\mathsf{y}},~  \forall p_1 = p_2, \label{AxAyconda}\\
        A_{p_1,p_2}^{\mathsf{x}} &= 0 ~ \textrm{or} ~ A_{p_1,p_2}^{\mathsf{y}}=0,~  \forall p_1 \neq p_2 \label{AxAycondb}
    \end{align}
\end{subequations}
with $p_1 \in \mathcal{I}_{N_{\mathsf{t}}},~ p_2 \in \mathcal{I}_{N_{\mathsf{t}}}$. When $p_1=p_2$, \eqref{AxAyconda} is naturally satisfied. To meet \eqref{AxAycondb}, a necessary and sufficient condition is
\begin{align}
    |C_{\mathsf{t},\mathsf{x}}| &\in \mathbb{Z}^+ ~ \textup{and} ~ \frac{C_{\mathsf{t},\mathsf{x}}(p_1-p_2)}{Q_{\mathsf{x}}} \notin \mathbb{Z},~ \forall p_1 \neq p_2,~ p_1,p_2 \in \mathcal{I}_{N_{\mathsf{t}}},~ \textrm{or} \label{CtxLim} \\
    |C_{\mathsf{t},\mathsf{y}}| &\in \mathbb{Z}^+ ~ \textup{and} ~ \frac{C_{\mathsf{t},\mathsf{y}}(p_1-p_2)}{Q_{\mathsf{y}}} \notin \mathbb{Z},~ \forall p_1 \neq p_2,~ p_1,p_2 \in \mathcal{I}_{N_{\mathsf{t}}} \label{CtyLim}
\end{align}
where $\mathbb{Z}^+$ and $\mathbb{Z}$ are the sets of positive integers and all integers, respectively. We first consider how to meet \eqref{CtxLim}. By recalling that $C_{\mathsf{t},\mathsf{x}}=\frac{d_{\mathsf{t}}S_{\mathsf{x}}Q_{\mathsf{x}}A_{\mathsf{t},\mathsf{x}}\sin\psi_{\mathsf{t}}\cos(\gamma_{\mathsf{t}}-\bar{\gamma}_{\mathsf{t},\mathsf{x}})}{\lambda D_{\mathsf{t}}}$ is inversely proportional to $D_{\mathsf{t}}$ in \eqref{Ctx}, we note that to ensure $|C_{\mathsf{t},\mathsf{x}}| \in \mathbb{Z}^+$, $D_{\mathsf{t}}$ is upper-bounded by $D_{\mathsf{t},\mathsf{x}}^{\mathsf{R}}\triangleq \frac{d_{\mathsf{t}}S_{\mathsf{x}}Q_{\mathsf{x}}A_{\mathsf{t},\mathsf{x}}}{\lambda}$. When $D_{\mathsf{t}} \leq D_{\mathsf{t},\mathsf{x}}^{\mathsf{R}}$, we can always ensure $|C_{\mathsf{t},\mathsf{x}}|=1$ by letting $\psi_{\mathsf{t}}$ and $\gamma_{\mathsf{t}}$ satisfy \eqref{txangleCrt}. Otherwise, when $D_{\mathsf{t}}> D_{\mathsf{t},\mathsf{x}}^{\mathsf{R}}$, $|C_{\mathsf{t},\mathsf{x}}|$ is always less than $1$. Futhermore, since $(p_1-p_2) \in \left\{-N_{\mathsf{t}}+1,\ldots,N_{\mathsf{t}}-1\right\}$,
    $|C_{\mathsf{t},\mathsf{x}}|=1 ~\textup{and}~ Q_{\mathsf{x}} \geq N_{\mathsf{t}}$
is a sufficient condition for $\frac{C_{\mathsf{t},\mathsf{x}}(p_1-p_2)}{Q_{\mathsf{x}}} \notin \mathbb{Z}$. Thus when $Q_{\mathsf{x}} \geq N_{\mathsf{t}}$ and $D_{\mathsf{t}} \leq D_{\mathsf{t},\mathsf{x}}^{\mathsf{R}}$, \eqref{txangleCrt} is a sufficient condition for \eqref{CtxLim}, as well as for the orthogonality and equal-gain requirement in \eqref{TOrthCrtt}. Similarly, to meet \eqref{CtyLim}, $D_{\mathsf{t}}$ is upper-bounded by $D_{\mathsf{t},\mathsf{y}}^{\mathsf{R}} \triangleq \frac{d_{\mathsf{t}}S_{\mathsf{y}}Q_{\mathsf{y}}A_{\mathsf{t},\mathsf{y}}}{\lambda}$. When $Q_{\mathsf{y}} \geq N_{\mathsf{t}}$ and $D_{\mathsf{t}} \leq D_{\mathsf{t},\mathsf{y}}^{\mathsf{R}}$, \eqref{tyangleCrt} is a sufficient condition for \eqref{CtyLim}, and also for the orthogonality and equal-gain requirement in \eqref{TOrthCrtt}. 

Based on the above, when $Q_{\mathsf{x}} \geq N_{\mathsf{t}}$ and $Q_{\mathsf{y}} \geq N_{\mathsf{t}}$, the Rayleigh distance of the Tx-IRS channel $D_{\mathsf{t}}^{\mathsf{R}}$ is given by $\max\{D_{\mathsf{t},\mathsf{x}}^{\mathsf{R}},D_{\mathsf{t},\mathsf{y}}^{\mathsf{R}}\}$. When $D_{\mathsf{t}} \leq D_{\mathsf{t}}^{\mathsf{R}}$, the orthogonality and equal-gain requirement in \eqref{TOrthCrtt} can be met by letting $\psi_{\mathsf{t}}$ and $\gamma_{\mathsf{t}}$ satisfy \eqref{txangleCrt} or \eqref{tyangleCrt}.

\section{Proof of Proposition \ref{propGray}}\label{proofGray}
We prove Proposition \ref{propGray} based on reflective focusing. Recall in \eqref{HPQ} that, with reflective focusing, the overall channel matrix $\boldsymbol{H}$ can be expressed by the Hadamard product of two matrices $\boldsymbol{P}$ and $\boldsymbol{Q}$ apart from the path loss factor $\eta_0$, with the elements of $\boldsymbol{P}$ and $\boldsymbol{Q}$, denoted by $P_{q,p}$ and $Q_{q,p}$, given in \eqref{Pqpfinal} and \eqref{Qqpfinal}, respectively, $p \in \mathcal{I}_{N_{\mathsf{t}}},~ q \in \mathcal{I}_{N_{\mathsf{r}}}$. Let $Q_{q,p}^{\mathsf{x}}\triangleq\frac{\sin\left(\pi \left( C_{\mathsf{t},\mathsf{x}}p + C_{\mathsf{r},\mathsf{x}}q \right)\right)}{\sin\left(\frac{\pi}{Q_{\mathsf{x}}} \left( C_{\mathsf{t},\mathsf{x}}p + C_{\mathsf{r},\mathsf{x}}q \right)\right)}$ and $Q_{q,p}^{\mathsf{y}}\triangleq\frac{\sin\left(\pi \left( C_{\mathsf{t},\mathsf{y}}p + C_{\mathsf{r},\mathsf{y}}q \right)\right)}{\sin\left(\frac{\pi}{Q_{\mathsf{y}}} \left( C_{\mathsf{t},\mathsf{y}}p + C_{\mathsf{r},\mathsf{y}}q \right)\right)}$ with $C_{\mathsf{t},\mathsf{x}}$, $C_{\mathsf{t},\mathsf{y}}$, $C_{\mathsf{r},\mathsf{x}}$, and $C_{\mathsf{r},\mathsf{y}}$ defined in \eqref{C}. Then, from \eqref{HPQ}, we have $Q_{q,p}=Q_{q,p}^{\mathsf{x}}Q_{q,p}^{\mathsf{y}}$, $p \in \mathcal{I}_{N_{\mathsf{t}}}$, $q \in \mathcal{I}_{N_{\mathsf{r}}}$. When 
\begin{align}
    \textup{sign}\left(C_{\mathsf{t},\mathsf{x}}C_{\mathsf{r},\mathsf{x}}\right) &= \textup{sign}\left(C_{\mathsf{t},\mathsf{y}}C_{\mathsf{r},\mathsf{y}}\right), \label{CSign} \\
    |C_{\mathsf{t},\mathsf{x}}|&=|C_{\mathsf{r},\mathsf{x}}| = 1, \label{CEquala} \\
    |C_{\mathsf{t},\mathsf{y}}|&=|C_{\mathsf{r},\mathsf{y}}|=C_{\mathsf{y}} \label{CEqualb}
\end{align}
where $C_{\mathsf{y}}$ is an arbitrary constant, we obtain
\begin{align}
    Q_{q,p}^{\mathsf{x}} &= \frac{\sin\left(\pi \left( p + q \right)\right)}{\sin\left(\frac{\pi}{Q_{\mathsf{x}}} \left( p + q \right)\right)}~\textup{and}~Q_{q,p}^{\mathsf{y}} = \frac{\sin\left(\pi C_{\mathsf{y}} \left( p + q \right)\right)}{\sin\left(\frac{\pi C_{\mathsf{y}}}{Q_{\mathsf{y}}} \left(p + q \right)\right)},  ~p \in \mathcal{I}_{N_{\mathsf{t}}},~ q \in \mathcal{I}_{N_{\mathsf{r}}},~ \textrm{or} \label{pplusq1}\\
    Q_{q,p}^{\mathsf{x}} &= \frac{\sin\left(\pi \left( p - q \right)\right)}{\sin\left(\frac{\pi}{Q_{\mathsf{x}}} \left( p - q \right)\right)}~\textup{and}~Q_{q,p}^{\mathsf{y}} = \frac{\sin\left(\pi C_{\mathsf{y}} \left( p - q \right)\right)}{\sin\left(\frac{\pi C_{\mathsf{y}}}{Q_{\mathsf{y}}} \left(p - q \right)\right)},  ~p \in \mathcal{I}_{N_{\mathsf{t}}},~ q \in \mathcal{I}_{N_{\mathsf{r}}}. \label{pplusq2}
\end{align}
Particularly, $Q_{q,p}^{\mathsf{x}}=Q_{\mathsf{x}}$ and $Q_{q,p}^{\mathsf{y}}=Q_{\mathsf{y}}$ for $q=-p$ in \eqref{pplusq1}, or alternatively for $q=p$ in \eqref{pplusq2}, $p \in \mathcal{I}_{N_{\mathsf{t}}}$, $q \in \mathcal{I}_{N_{\mathsf{r}}}$.
Since $p+q$ and $p-q$ are integers, the numerators of $Q_{q,p}^{\mathsf{x}}$ in \eqref{pplusq1} and \eqref{pplusq2} are always zeros. Furthermore, when $N_{\mathsf{t}}+N_{\mathsf{r}}-2 < 2Q_{\mathsf{x}}$, we have $\frac{\left|p+q\right|}{Q_{\mathsf{x}}} < 1$ and $\frac{\left|p-q\right|}{Q_{\mathsf{x}}} < 1$. Then, the denominators of $Q_{q,p}^{\mathsf{x}}$ in \eqref{pplusq1} and \eqref{pplusq2} are non-zero except for the cases of $p-q=0$ and $p+q=0$, respectively. Based on the above, when \eqref{CSign}, \eqref{CEquala}, \eqref{CEqualb}, and $N_{\mathsf{t}}+N_{\mathsf{r}}-2 < 2Q_{\mathsf{x}}$ are met, the elements of $\boldsymbol{Q}$ are given by
\begin{equation}\label{Qqppp}
    Q_{q,p}=\left\{\begin{array}{ll}
        Q_{\mathsf{x}}Q_{\mathsf{y}}, & q = -p ~\textup{or}~ q=p \\
        0, & \rm{otherwise}
    \end{array}\right.
    ,~ p \in \mathcal{I}_{N_{\mathsf{t}}},~ q \in \mathcal{I}_{N_{\mathsf{r}}}.
\end{equation}
Correspondingly, the elements of $\boldsymbol{H}$ in \eqref{HPQ} are given by
\begin{equation}\label{tildeHqp}
    h_{q,p}=\left\{\begin{array}{ll}
        \eta_0 Q_{\mathsf{x}} Q_{\mathsf{y}} P_{q,p}, & q = -p ~\textup{or}~ q=p  \\
        0, & \rm{otherwise}
    \end{array}\right.
    ,~ p \in \mathcal{I}_{N_{\mathsf{t}}},~ q \in \mathcal{I}_{N_{\mathsf{r}}}.
\end{equation}
By noting $|P_{q,p}|=1$, we see that \eqref{CSign}, \eqref{CEquala}, \eqref{CEqualb}, and $N_{\mathsf{t}}+N_{\mathsf{r}}-2 < 2Q_{\mathsf{x}}$ give a sufficient condition of the orthogonality requirements in \eqref{NtgeqNr}. 
Thus, to prove Proposition \ref{propGray}, it suffices to show that for a cascaded LoS MIMO system with $N_{\mathsf{t}}+N_{\mathsf{r}}-2 < 2Q_{\mathsf{x}}$, in the region $\mathcal{D}_{\mathsf{c},\mathsf{x}}(\mathcal{L})$ (defined by \eqref{DcxStart} and \eqref{Dcx2}), \eqref{CSign}-\eqref{CEqualb} can be satisfied by adjusting $\psi_{\mathsf{t}}$, $\gamma_{\mathsf{t}}$, $\psi_{\mathsf{r}}$, and $\gamma_{\mathsf{r}}$. 

To this end, we see that \eqref{CSign} can be rewritten as
\begin{equation}\label{signtransform}
    \textup{sign}\left(\frac{C_{\mathsf{t},\mathsf{x}}}{C_{\mathsf{t},\mathsf{y}}}\right) = \textup{sign}\left(\frac{C_{\mathsf{r},\mathsf{x}}}{C_{\mathsf{r},\mathsf{y}}}\right).
\end{equation}
By plugging \eqref{CEquala} and \eqref{CEqualb} into \eqref{signtransform}, we obtain $\frac{C_{\mathsf{t},\mathsf{x}}}{C_{\mathsf{t},\mathsf{y}}} = \frac{C_{\mathsf{r},\mathsf{x}}}{C_{\mathsf{r},\mathsf{y}}}$. 
By plugging \eqref{Ctx} into $\frac{C_{\mathsf{t},\mathsf{x}}}{C_{\mathsf{t},\mathsf{y}}} = \frac{C_{\mathsf{r},\mathsf{x}}}{C_{\mathsf{r},\mathsf{y}}}$, we see that to meet \eqref{CSign}-\eqref{CEqualb}, a necessary condition for $\gamma_{\mathsf{t}}$ and $\gamma_{\mathsf{r}}$ is
\begin{align}\label{CosRatio}
    \frac{ \cos(\gamma_{\mathsf{r}}-\bar{\gamma}_{\mathsf{r},\mathsf{y}})}{\cos(\gamma_{\mathsf{r}}-\bar{\gamma}_{\mathsf{r},\mathsf{x}})} &= \frac{A_{\mathsf{t},\mathsf{y}}A_{\mathsf{r},\mathsf{x}}\cos(\gamma_{\mathsf{t}}-\bar{\gamma}_{\mathsf{t},\mathsf{y}})}{A_{\mathsf{t},\mathsf{x}}A_{\mathsf{r},\mathsf{y}} \cos(\gamma_{\mathsf{t}}-\bar{\gamma}_{\mathsf{t},\mathsf{x}})}. 
\end{align}
Next, based on \eqref{Ctx}, \eqref{Crx}, $D_{\mathsf{t},\mathsf{x}}^{\mathsf{R}} \triangleq \frac{d_{\mathsf{t}} S_{\mathsf{x}}Q_{\mathsf{x}}A_{\mathsf{t},\mathsf{x}}}{\lambda}$, and $D_{\mathsf{r},\mathsf{x}}^{\mathsf{R}} \triangleq \frac{d_{\mathsf{r}} S_{\mathsf{x}}Q_{\mathsf{x}}A_{\mathsf{r},\mathsf{x}}}{\lambda}$, we rewrite \eqref{CEquala} as
\begin{align}
    |C_{\mathsf{t},\mathsf{x}}|&=\frac{D_{\mathsf{t},\mathsf{x}}^{\mathsf{R}} \sin\psi_{\mathsf{t}}|\cos(\gamma_{\mathsf{t}}-\bar{\gamma}_{\mathsf{t},\mathsf{x}})|}{ D_{\mathsf{t}}}=1 ~\textup{and}~ \label{CtxrxEqual1a} \\
    |C_{\mathsf{r},\mathsf{x}}|&=\frac{D_{\mathsf{r},\mathsf{x}}^{\mathsf{R}}\sin\psi_{\mathsf{r}}|\cos(\gamma_{\mathsf{r}}-\bar{\gamma}_{\mathsf{r},\mathsf{x}})|}{ D_{\mathsf{r}}}=1. \label{CtxrxEqual1b}
\end{align}
The equation set \eqref{CSign}-\eqref{CEqualb} is equivalent to \eqref{CosRatio}-\eqref{CtxrxEqual1b}. To meet \eqref{CtxrxEqual1a} and \eqref{CtxrxEqual1b}, $D_{\mathsf{t}}$ and $D_{\mathsf{r}}$ are upper bounded by $D_{\mathsf{t},\mathsf{x}}^{\mathsf{R}}$ and $D_{\mathsf{t},\mathsf{y}}^{\mathsf{R}}$, respectively, i.e., $0< D_{\mathsf{t}} \leq D_{\mathsf{t},\mathsf{x}}^{\mathsf{R}}$ and $0< D_{\mathsf{r}} \leq D_{\mathsf{r},\mathsf{x}}^{\mathsf{R}}$.

\begin{figure}[t]
    \centering
    \includegraphics[width = .45\linewidth]{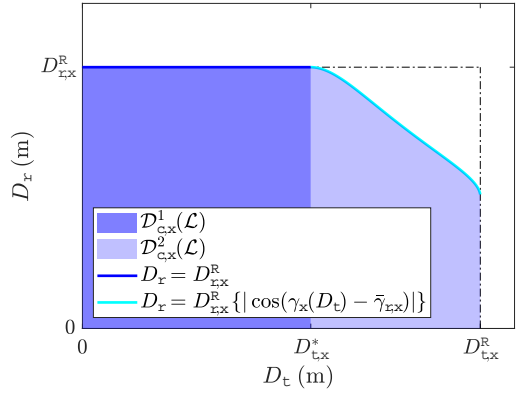}
    \caption{Illustration of a general $\mathcal{D}_{\mathsf{c},\mathsf{x}}(\mathcal{L})$.}
    \label{FMRPROOF}
\end{figure}

To be clear, Fig. \ref{FMRPROOF} illustrates a general $\mathcal{D}_{\mathsf{c},\mathsf{x}}(\mathcal{L})=\mathcal{D}_{\mathsf{c},\mathsf{x}}^1(\mathcal{L})\cup \mathcal{D}_{\mathsf{c},\mathsf{x}}^2(\varphi_{\mathsf{t}},\omega_{\mathsf{t}},\allowbreak \varphi_{\mathsf{r}}, \allowbreak \omega_{\mathsf{r}})$, with $\mathcal{D}_{\mathsf{c},\mathsf{x}}^1(\mathcal{L})$ defined by \eqref{DcxStart} and $\mathcal{D}_{\mathsf{c},\mathsf{x}}^2(\mathcal{L})$ defined by \eqref{Dcx2}. We first prove that when $(D_{\mathsf{t}},D_{\mathsf{r}}) \in \mathcal{D}_{\mathsf{c},\mathsf{x}}^1(\mathcal{L})$, \eqref{CosRatio}-\eqref{CtxrxEqual1b} can be ensured. On the horizontal line $D_{\mathsf{r}}=D_{\mathsf{r},\mathsf{x}}^{\mathsf{R}}$, to meet \eqref{CtxrxEqual1b}, we have $\sin \psi_{\mathsf{r}} = 1$ and $|\cos(\gamma_{\mathsf{r}}-\bar{\gamma}_{\mathsf{r},\mathsf{x}})|=1$, i.e., $\psi_{\mathsf{t}}=\frac{\pi}{2}$ and $\tan \gamma_{\mathsf{r}}= \tan \bar{\gamma}_{\mathsf{r},\mathsf{x}}$. By plugging $|\cos(\gamma_{\mathsf{r}}-\bar{\gamma}_{\mathsf{r},\mathsf{x}})|=1$ into the constraint for $\gamma_{\mathsf{t}}$ and $\gamma_{\mathsf{r}}$ in \eqref{CosRatio}, we have $\tan \gamma_{\mathsf{t}}= \tan \gamma_{\mathsf{t},\mathsf{x}}^*$ with $\gamma_{\mathsf{t},\mathsf{x}}^*$ defined in \eqref{GammatxStarb}.
With $\tan \gamma_{\mathsf{t}}= \tan \gamma_{\mathsf{t},\mathsf{x}}^*$, to meet \eqref{CtxrxEqual1a}, $\psi_{\mathsf{t}}$ is restrained to $\sin \psi_{\mathsf{t}} = \frac{D_{\mathsf{t}}}{D_{\mathsf{t},\mathsf{x}}^*}$ with $D_{\mathsf{t},\mathsf{x}}^* \triangleq D_{\mathsf{t},\mathsf{x}}^{\mathsf{R}}|\cos(\gamma_{\mathsf{t},\mathsf{x}}^*-\bar{\gamma}_{\mathsf{t},\mathsf{x}})|$, and $D_{\mathsf{t}}$ is upper bounded by $D_{\mathsf{t},\mathsf{x}}^*$ for $0\leq \sin \psi_{\mathsf{t}} \leq 1$. Thus for $(D_{\mathsf{t}},D_{\mathsf{r}}) \in \left\{ 0<D_{\mathsf{t}} \leq D_{\mathsf{t},\mathsf{x}}^*,~ D_{\mathsf{r}} = D_{\mathsf{r},\mathsf{x}}^{\mathsf{R}} \right\}$, \eqref{CosRatio}-\eqref{CtxrxEqual1b} can be met by letting $\sin \psi_{\mathsf{t}}= \frac{D_{\mathsf{t}}}{D_{\mathsf{t},\mathsf{x}}^*}$, $\psi_{\mathsf{r}}=\frac{\pi}{2}$, and $\gamma_{\mathsf{t}}$ and $\gamma_{\mathsf{r}}$ meet \eqref{Dcx1Orienta} and \eqref{Dcx1Orienta2}, respectively. Note that \eqref{CosRatio} is irrelevant to $\psi_{\mathsf{t}}$ and $\psi_{\mathsf{r}}$. Therefore, for $(D_{\mathsf{t}},D_{\mathsf{r}}) \in \left\{ 0<D_{\mathsf{t}} \leq D_{\mathsf{t},\mathsf{x}}^*,~ 0 < D_{\mathsf{r}} < D_{\mathsf{r},\mathsf{x}}^{\mathsf{R}} \right\}$, \eqref{CosRatio} can be met still by letting $\gamma_{\mathsf{t}}$ and $\gamma_{\mathsf{r}}$ satisfy \eqref{Dcx1Orienta} and \eqref{Dcx1Orienta2}, respectively. Then, \eqref{CtxrxEqual1a} can be met still by letting $\sin\psi_{\mathsf{t}} = \frac{D_{\mathsf{t}}}{D_{\mathsf{t},\mathsf{x}}^*}$, and \eqref{CtxrxEqual1b} can be met by letting $\sin\psi_{\mathsf{r}} = \frac{D_{\mathsf{r}}}{D_{\mathsf{r},\mathsf{x}}^{\mathsf{R}}}$. Based on the above, In $\mathcal{D}_{\mathsf{c},\mathsf{x}}^1(\mathcal{L})$, \eqref{CosRatio}-\eqref{CtxrxEqual1b} can be met by letting $\gamma_{\mathsf{t}}$, $\gamma_{\mathsf{r}}$, $\psi_{\mathsf{t}}$, and $\psi_{\mathsf{r}}$ meet \eqref{Dcx1Orienta}-\eqref{Dcx1Orientb}.

Next we prove that when $(D_{\mathsf{t}},D_{\mathsf{r}}) \in \mathcal{D}_{\mathsf{c},\mathsf{x}}^2(\mathcal{L})$, \eqref{CosRatio}-\eqref{CtxrxEqual1b} can be ensured by letting $\gamma_{\mathsf{t}}$, $\gamma_{\mathsf{r}}$, $\psi_{\mathsf{t}}$, and $\psi_{\mathsf{r}}$ satisfy \eqref{Dcx2Orienta}-\eqref{Dcx2Orientd}. 
Note that \eqref{Dcx2Orientd} is valid only for $0<D_{\mathsf{r}} \leq D_{\mathsf{r},\mathsf{x}}^{\mathsf{R}}|\cos(\gamma_{\mathsf{x}}(D_{\mathsf{t}})-\bar{\gamma}_{\mathsf{r},\mathsf{x}})|$. 
By letting $\psi_{\mathsf{t}}=\frac{\pi}{2}$ and $\gamma_{\mathsf{t}}$ meet \eqref{Dcx2Orienta}, we have $|\cos(\gamma_{\mathsf{t}}-\bar{\gamma}_{\mathsf{t},\mathsf{x}})|=\frac{D_{\mathsf{t}}}{D_{\mathsf{t},\mathsf{x}}^{\mathsf{R}}}$ and $\sin \psi_{\mathsf{t}}=1$, and thus \eqref{CtxrxEqual1a} is met. By letting $\sin \psi_{\mathsf{r}} = \frac{D_{\mathsf{r}}}{D_{\mathsf{r},\mathsf{x}}^{\mathsf{R}}|\cos(\gamma_{\mathsf{x}}(D_{\mathsf{t}})-\bar{\gamma}_{\mathsf{r},\mathsf{x}})|}$ and $\gamma_{\mathsf{r}}$ meet \eqref{Dcx2Orientb}, we have $\sin \psi_{\mathsf{r}}|\cos(\gamma_{\mathsf{r}}-\bar{\gamma}_{\mathsf{r},\mathsf{x}})| = \frac{D_{\mathsf{r}}}{D_{\mathsf{r},\mathsf{x}}^{\mathsf{R}}}$, and thus \eqref{CtxrxEqual1b} is met. By defining 
\begin{equation}\label{Gx}
    G_{\mathsf{x}}(\gamma_{\mathsf{t}}) \triangleq  \frac{A_{\mathsf{t},\mathsf{y}}A_{\mathsf{r},\mathsf{x}}\left( \cos(\bar{\gamma}_{\mathsf{t},\mathsf{x}}-\bar{\gamma}_{\mathsf{t},\mathsf{y}}) - \sin(\bar{\gamma}_{\mathsf{t},\mathsf{x}}-\bar{\gamma}_{\mathsf{t},\mathsf{y}}) \tan(\gamma_{\mathsf{t}}-\bar{\gamma}_{\mathsf{t},\mathsf{x}}) \right)}{A_{\mathsf{t},\mathsf{x}}A_{\mathsf{r},\mathsf{y}}},
\end{equation}
\eqref{CosRatio} can be rewritten as 
\begin{equation}\label{tanG}
    \tan \gamma_{\mathsf{r}} = \frac{\cos \bar{\gamma}_{\mathsf{r},\mathsf{y}}-G_{\mathsf{x}}(\gamma_{\mathsf{t}})\cos \bar{\gamma}_{\mathsf{r},\mathsf{x}}}{G_{\mathsf{x}}(\gamma_{\mathsf{t}})\sin \bar{\gamma}_{\mathsf{r},\mathsf{x}}-\sin \bar{\gamma}_{\mathsf{r},\mathsf{y}}}.
\end{equation}
Then, by plugging \eqref{gammaxPstva}, \eqref{Dcx2Orienta}, \eqref{Dcx2Orientb}, and \eqref{Gx} into \eqref{tanG}, we see that the left-hand side equals the right-hand side. Thus, when $(D_{\mathsf{t}},D_{\mathsf{r}}) \in \left\{D_{\mathsf{t},\mathsf{x}}^* < D_{\mathsf{t}} \leq D_{\mathsf{t},\mathsf{x}}^{\mathsf{R}},0<D_{\mathsf{r}}\leq D_{\mathsf{r},\mathsf{x}}^{\mathsf{R}}|\cos(\gamma_{\mathsf{x}}(D_{\mathsf{t}})-\bar{\gamma}_{\mathsf{r},\mathsf{x}})|\right\}$, \eqref{CosRatio}-\eqref{CtxrxEqual1b} can always be met by letting $\gamma_{\mathsf{t}}$, $\gamma_{\mathsf{r}}$, $\psi_{\mathsf{t}}$, and $\psi_{\mathsf{r}}$ satisfy \eqref{Dcx2Orienta}-\eqref{Dcx2Orientd}. 

Based on the above, when $(D_{\mathsf{t}},D_{\mathsf{r}}) \in \mathcal{D}_{\mathsf{c},\mathsf{x}}(\mathcal{L})$, the orthogonality requirements in \eqref{NtgeqNr} can always be met by letting $\gamma_{\mathsf{t}}$, $\gamma_{\mathsf{r}}$, $\psi_{\mathsf{t}}$, and $\psi_{\mathsf{r}}$ satisfy \eqref{Dcx1Orienta}-\eqref{Dcx1Orientb} or \eqref{Dcx2Orienta}-\eqref{Dcx2Orientd}. Thus, the FMR $\mathcal{D}_{\mathsf{c}}(\mathcal{L})$ covers $\mathcal{D}_{\mathsf{c},\mathsf{x}}(\mathcal{L})$. It can be proved in a similar way that $\mathcal{D}_{\mathsf{c}}(\mathcal{L})$ also covers $\mathcal{D}_{\mathsf{c},\mathsf{y}}(\mathcal{L})$, which is defined by swapping the subscripts $\mathsf{x}$ and $\mathsf{y}$ in the definition of $\mathcal{D}_{\mathsf{c},\mathsf{x}}(\mathcal{L})$. Therefore, when $N_{\mathsf{t}}+N_{\mathsf{r}}-2<2Q_{\mathsf{x}}$ and $N_{\mathsf{t}}+N_{\mathsf{r}}-2<2Q_{\mathsf{y}}$, the FMR $\mathcal{D}_{\mathsf{c}}(\mathcal{L})$ at least covers the union of $\mathcal{D}_{\mathsf{c},\mathsf{x}}(\mathcal{L})$ and $\mathcal{D}_{\mathsf{c},\mathsf{y}}(\mathcal{L})$, which concludes the proof.

\ifCLASSOPTIONcaptionsoff
  \newpage
\fi

\bibliographystyle{IEEEtran}
\bibliography{IEEEabrv,IEEEexample.bib}





\end{document}